\newcommand*{\indep}{%
  \mathbin{%
    \mathpalette{\@indep}{}%
  }%
}
\newcommand*{\nindep}{%
  \mathbin{
    \mathpalette{\@indep}{\not}
  }%
}
\newcommand*{\@indep}[2]{%
  \sbox0{$#1\perp\m@th$}
  \sbox2{$#1=$}
  \sbox4{$#1\vcenter{}$}
  \rlap{\copy0}
  \dimen@=\dimexpr\ht2-\ht4-.2pt\relax
  \kern\dimen@
  {#2}%
  \kern\dimen@
  \copy0 
} 
\DeclareMathOperator{\E}{\textnormal{\mbox{E}}}
\DeclareMathOperator*{\argmax}{arg\,max}
\definecolor{forestgreen}{RGB}{34,139,34}
\newtheorem{theorem}{Theorem}
\newtheorem{proposition}[theorem]{Proposition}
\newtheorem*{remark}{Remark}
\xpatchcmd{\proof}{\@addpunct{.}}{\@addpunct{:}}{}{}
\def\@seccntformat#1{\@ifundefined{#1@cntformat}%
   {\csname the#1\endcsname\quad}  
   {\csname #1@cntformat\endcsname}
}
\let\oldappendix\appendix 
\renewcommand\appendix{%
    \oldappendix
    \newcommand{\section@cntformat}{\appendixname~\thesection\quad}
}
\newcommand{\copyrightstatement}{
    \begin{textblock}{0.84}(0.08,0.93)    
         \noindent
         \footnotesize
         This draft manuscript presents work in progress. \\
         Comments and reports of mistakes are very much welcome at \href{mailto:issa\_dahabreh@brown.edu}{issa\_dahabreh@brown.edu}.
    \end{textblock}
}
\def\paperversionmajor{12}
\def\paperversionminor{0}
\begin{document}

\title{Generalizing trial findings using nested trial designs with sub-sampling of non-randomized individuals}

\author[1,2,3]{Issa J. Dahabreh
}
\author[3,4,5]{Miguel A. Hern\'an}
\author[1]{Sarah E. Robertson}
\author[6]{Ashley Buchanan}
\author[7]{Jon A. Steingrimsson}

\affil[1]{Center for Evidence Synthesis in Health and Department of Health Services, Policy \& Practice, Brown University, Providence, RI}
\affil[2]{Department of Epidemiology, Brown University, Providence, RI}
\affil[3]{Department of Epidemiology, Harvard T.H. Chan School of Public Health, Boston, MA}
\affil[4]{Department of Biostatistics, Harvard School of Public Health, Boston, MA}
\affil[5]{Harvard-MIT Division of Health Sciences and Technology, Boston, MA}
\affil[6]{Department of Pharmacy Practice, College of Pharmacy, University of Rhode Island, RI}
\affil[7]{Department of Biostatistics, Brown University School of Public Health, Providence, RI}

\copyrightstatement

\maketitle{}

\thispagestyle{empty}

\clearpage
\thispagestyle{empty}
\vspace*{1.6in}
\begin{abstract}
\noindent
To generalize inferences from a randomized trial to the target population of all trial-eligible individuals, investigators can use nested trial designs, where the randomized individuals are nested within a cohort of trial-eligible individuals, including those who are not offered or refuse randomization. In these designs, data on baseline covariates are collected from the entire cohort, and treatment and outcome data need only be collected from randomized individuals. In this paper, we describe nested trial designs that improve research economy by collecting additional baseline covariate data after sub-sampling non-randomized individuals (i.e., a two-stage design), using sampling probabilities that may depend on the initial set of baseline covariates available from all individuals in the cohort. We propose an estimator for the potential outcome mean in the target population of all trial-eligible individuals and show that our estimator is doubly robust, in the sense that it is consistent when either the model for the conditional outcome mean among randomized individuals or the model for the probability of trial participation is correctly specified. We assess the impact of sub-sampling on the asymptotic variance of our estimator and examine the estimator's finite-sample performance in a simulation study. We illustrate the methods using data from the Coronary Artery Surgery Study (CASS).
\end{abstract}

\clearpage
\section{Background}
\setcounter{page}{1}

Among individuals invited to participate in a randomized trial, those who agree to be randomized often differ from those who decline in terms of variables that are modifiers of the treatment effect. When that is the case, potential (counterfactual) outcome means and average treatment effects estimated in the trial do not directly apply to the \emph{target population of all trial-eligible individuals}. To address this problem, investigators can use a \emph{nested trial design} \cite{dahabreh2018generalizing}, where the randomized individuals are nested within a cohort of trial-eligible individuals, including those who are not offered or refuse randomization. In this design, baseline covariate data are collected from all individuals in the cohort, but treatment and outcome data need only be collected from randomized individuals. This is the basic study design approach in comprehensive cohort studies \cite{olschewski1985, olschewski1992} and pragmatic randomized trials conducted within large health-care systems \cite{choudhry2017}. 

When randomized and non-randomized individuals in nested trial designs are exchangeable conditional on baseline covariates \cite{cole2010,OMuircheartaigh2014, tipton2012, hartman2013, lesko2017practical}, we recently proposed efficient and robust estimators for the potential (counterfactual) outcome means and the average treatment effect \cite{dahabreh2018generalizing} in the target population. The validity of the estimators depends on the conditional exchangeability of randomized and non-randomized individuals, thus, it is important to have information on a rich enough set of baseline covariates, both from randomized and non-randomized individuals, to render the condition plausible.

In many applications, the baseline covariates that can be easily collected from non-randomized individuals are only a subset of the covariates collected from randomized individuals. The common baseline covariates collected from both randomized and non-randomized individuals may be insufficient for exchangeability and valid inference requires the collection of additional covariate information from the non-randomized individuals. When data collection is expensive, research economy can be improved by using a two-stage design \cite{rose2011targeted} with \emph{sub-sampling of non-randomized individuals}, that is, by collecting additional baseline covariate information only among a subset of the non-randomized individuals. The subset of non-randomized individuals targeted for additional data collection may be selected using sampling probabilities that depend on the initial set of baseline covariates.

In this paper we examine methods for generalizing causal inferences in nested trial designs with sub-sampling of non-randomized individuals (i.e., two-stage designs), when the sampling probabilities depend on the initial set of baseline auxiliary covariates. We propose an efficient and robust estimator for the potential outcome means in the target population of all trial-eligible individuals and show that our estimator is doubly robust, in the sense that it is consistent when either the model for the conditional outcome mean among randomized individuals or the model for the probability of trial participation is correctly specified. We assess the impact of sub-sampling on the asymptotic variance of the estimator and examine its finite-sample performance in a simulation study. We illustrate the application of the methods using data from the Coronary Artery Surgery Study (CASS) \cite{william1983}.

\section{Study designs and identifiability conditions}

\subsection{Nested trial designs with sub-sampling}

We begin by considering \emph{nested trial designs}, where a randomized trial is nested in a cohort of trial-eligible individuals \cite{dahabreh2018generalizing}. Let $A$ be the assigned treatment that takes values in $\mathcal A$, the set of treatments assessed in the randomized trial (we only consider discrete treatments); $Y$ the observable outcome; $X$ the (possibly high dimensional) vector of baseline covariates; and $S$ an indicator for trial participation ($S=1$ for randomized individuals; $S=0$ for non-randomized individuals).

When baseline covariate data are collected from all individuals in the cohort, but treatment and outcome data are collected only from randomized individuals, the observed data are 
\begin{equation*}
    \begin{cases}
      (X, S = 1, A, Y) \mbox{, for randomized individuals;} \\
      (X, S =0) \mbox{, for non-randomized individuals.}
    \end{cases}
\end{equation*}

Now, suppose that the baseline covariates are partitioned as $X = (X_1, X_2)$ and that the component $X_1$ is readily available, whereas $X_2$ is expensive to collect. For example, suppose that a randomized trial is nested within a health-care system and that trial-eligible individuals in the health-care system can be identified using routinely collected data. Claims and electronic health record data ($X_1$) are available from both randomized and non-randomized individuals at very low cost. In contrast, specialized laboratory/ imaging test results, or interview data ($X_2$), which are collected from randomized individuals, may be unavailable in the routinely collected data, expensive to collect (e.g., requiring manual chart abstraction), and necessary to ensure randomized and non-randomized individuals are exchangeable (see below). 

In settings like this, to avoid collecting the expensive covariates on all non-randomized individuals, it is natural to consider \emph{a two-stage design} where we sample non-randomized individuals for additional data collection, with sampling probabilities that may depend on $X_1$. We refer to this design as a \emph{nested trial with sub-sampling of non-randomized individuals.} For example, \cite{buchanan2018generalizing} described a special case of this design where the sampling probability was not allowed to depend on baseline covariates. Following \cite{buchanan2018generalizing} and related work on two-stage designs (e.g., \cite{weinberg1990design, weinberg1991randomized, breslow2000semi, kulich2004improving}), we assume assume Bernoulli-type (independent) sampling \cite{breslow2007weighted, saegusa2013weighted} of non-randomized individuals.

Let $D = 1$ be an indicator for whether $X_2$ data are collected from an individual; $D = 1$ for randomized individuals and sampled non-randomized individuals; $D=0$ for non-sampled non-randomized individuals. Using this notation, the observed data from a nested trial with sub-sampling of non-randomized individuals are
\begin{equation*}
    \begin{cases}
      (S = 1, D = 1, X, A, Y) & \mbox{, for randomized individuals;} \\
      (S = 0, D = 1, X) & \mbox{, for sampled non-randomized individuals;} \\
      (S = 0, D = 0, X_1) & \mbox{, for non-sampled non-randomized individuals}.
    \end{cases}
\end{equation*}

Figure \ref{fig:data_structure_schematic} provides a schematic of the data structure and highlights that the observed data, after sub-sampling of non-randomized individuals, have a monotone missing data pattern.

\subsection{Sampling properties}

In the nested trial design with sub-sampling of non-randomized individuals, we collect data on baseline covariates $X = (X_1, X_2)$; treatments, $A$; and outcomes, $Y$, from all randomized individuals, such that \[ \Pr[D = 1 | X, A, Y, S = 1] = 1. \] Furthermore, $X_1$ data are collected from all non-randomized individuals, but $X_2$ data are collected only from a subset. The sampling probability with which non-randomized individuals are selected for additional data collection depends only on $X_1$, that is, \[ \Pr[D = 1 | X, A, Y, S = 0] = \Pr[D = 1 | X_1, S = 0].\] Thus, the study design ensures that $D \indep (X, A, Y) | X_1, S$, which is a missing at random condition \cite{rubin1976inference}. Furthermore, by design, the sampling probability should be positive, \[\Pr[D = 1 | X_1 = x_1, S = 0] > 0,\] for all $x_1$ that have positive density among non-randomized individuals, $f_{X_1 | S}(x_1 | S = 0) > 0$. In the context of case-referrent studies, an approach similar to ours has been termed ``biased sampling'' \cite{weinberg1990design} or ``randomized recruitment'' \cite{weinberg1991randomized}.

For convenience, we define the following conditional sampling probability function:
\begin{equation*}
  \begin{split}
    c(X_1, S) \equiv \Pr[D = 1 | X_1, S] &= I(S = 1) + I(S = 0) \times \Pr[D = 1 | X_1, S = 0].
  \end{split}
\end{equation*}
A special case occurs when sub-sampling non-randomized individuals with probabilities that do not depend on baseline covariates, in which case the sampling function becomes  
\begin{equation*}
  \begin{split}
    c(S) \equiv \Pr[D = 1 | S] &= I(S = 1) + I(S = 0) \times c,
  \end{split}
\end{equation*}
where $c$ is a known constant, $ 0 < c < 1$.

\subsection{Causal quantities}

In order to define the causal contrasts of interest, let $Y^a$ denote the potential (counterfactual) outcome \cite{rubin1974, robins2000d} under intervention to set treatment to $a \in \mathcal A$.

We are interested in the potential outcome means in the target population of all trial-eligible individuals, $\E[Y^a]$. These potential outcome means are of inherent scientific interest and can also be used to identify average causal effects. For example, for $a, a^\prime \in \mathcal A$, the average treatment effect is $\E[Y^a - Y^{a^\prime}] = \E[Y^a ] - \E[Y^{a^\prime}]$.

\subsection{Identifiability conditions}

We assume that the following identifiability conditions hold for each $a \in \mathcal A$ \cite{dahabreh2018generalizing}: \emph{(I) Consistency of potential outcomes:} interventions are well-defined, so that if $A_i = a,$ then $Y^a_i = Y_i$. Implicit in this notation is that the offer to participate in the trial and trial participation itself do not have an effect on the outcome except through treatment assignment. \emph{(II) Mean exchangeability among trial participants:} $\E [ Y^a | X, S = 1, A = a] = \E [ Y^a | X, S = 1] $. This condition is expected to hold because of randomization (marginal or conditional on $X$). \emph{(III) Positivity of treatment assignment in the trial:} $\Pr[A =a | X = x, S = 1] > 0$ for each $x$ with positive density in the trial, $f_X(x | S = 1) > 0$. \emph{(IV) Mean generalizability (exchangeability over $S$):} $\E[ Y^a | X, S = 1] = \E[Y^a | X]$. Because $S$ is binary, this condition implies the mean transportability condition $\E[ Y^a | X, S = 1] = \E[Y^a | X, S = 0 ]$. \emph{(V) Positivity of trial participation:} $\Pr[S = 1 | X = x] > 0$, for each $x$ with positive density in the target population, $f_X(x) > 0$. 

Here, we have used $X$ generically to denote baseline covariates. It is possible however, that strict subsets of $X$ are adequate to satisfy the different exchangeability conditions. For example, in a marginally randomized trial the mean exchangeability among trial participants holds unconditionally. Furthermore, to focus on issues related to selective trial participation, we will assume complete adherence to the assigned treatment and no loss-to-follow-up.

\section{Identification}\label{sec:identification}

Under identifiability conditions \emph{(I)} through \emph{(V)}, the potential outcome mean under treatment $a$ in the target population, $\E[Y^a]$, can be expressed as a function of the full (observable) data,
\begin{equation}\label{eq:identification}
	\begin{split}
	\E[Y^a] 	&= \E \big[ \E[Y|X,S=1, A =a] \big] \\
            &= \int \E[Y|X = x,S=1, A =a] dF_X(x),
	\end{split}
\end{equation}	
where $F_X(x)$ is the cumulative distribution function of $X$ in the target population. 

Because $X = (X_1, X_2)$ is observed solely when $D = 1$, that is, among randomized individuals and sampled non-randomized individuals, whereas only $X_1$ is observed when $D = 0$, the above result cannot be directly applied to the observed data in nested trial designs with sub-sampling. In this design, however, as we show in Appendix \ref{appendix:A_identification},
\begin{equation}\label{eq:identification_subsampling}
	\begin{split}
		\E \big [ \E[ Y | X, S=1, A =a] \big] 	&= \E\left[ \dfrac{I(D = 1)}{c(X_1, S)}  \E [ Y | X, S = 1, A = a ] \right] \\
										&= \E\Big[ \E \big[ \E[Y | X, S = 1, A = a] | X_1, S, D = 1  \big]    \Big].
	\end{split}
\end{equation}
The above re-expression, together with the result in (\ref{eq:identification}), shows that $\E[Y^a]$ is identifiable in the nested trial design with sub-sampling of non-randomized individuals.

\section{Estimation and inference}

\subsection{Estimation}

We wish to estimate the functional $ \psi(a) = \E\Big[ \E \big[ \E[Y | X, S = 1, A = a] | X_1, S, D = 1  \big] \Big]$ under the semi-parametric model described by the identifiability conditions and sampling properties in Section \ref{sec:identification}. Using the efficient influence function \cite{bickel1993efficient} of $\psi(a)$ (see Appendix \ref{appendix:B_estimation_robustness} for details), we obtain the following one-step, in-sample estimator of $\psi(a)$,
\begin{equation}\label{eq:estimator}
\widehat \psi(a) = \dfrac{1}{n} \sum\limits_{i=1}^n \Biggl\{ \widehat b_a(X_{1i}, S_i) + \dfrac{I(D_i = 1)}{ c(X_{1i}, S_i)} \Big\{ \widehat g_a(X_i)  - \widehat b_a(X_{1i}, S_i) \Big\} + \dfrac{I(S_i = 1, A_i =a )}{\widehat p(X_i)  e_a(X_i)} \Big\{ Y_i  - \widehat g_a(X_i) \Big\}  \Biggl\},
\end{equation}
where $ c(X_{1}, S) = I(S=1) + I(S = 0) \times  \Pr[D = 1 | X_1, S = 0]$; $\widehat b_a(X_{1}, S)$ is an estimator for $\E \big[ \E[Y | X, S = 1, A = a] | X_1, S, D = 1  \big]$; $\widehat p(X)$ is an estimator for $\Pr[S = 1 | X]$; $ e_a(X) = \Pr[A =a | X, S = 1]$; and $\widehat g_a(X)$ is an estimator for $\E[Y | X, S = 1, A = a]$. Note that $\Pr[D = 1 | X_1, S = 0]$ and $\Pr[A =a | X, S = 1]$ are known by design, but they may also be estimated from the data. Estimating these known functions does not affect the large-sample behavior of the estimator \cite{robins1994estimation, robins1995semiparametric, lunceford2004, williamson2014variance}.

\vspace{0.15in}
\noindent
\emph{Estimating the probability of trial participation:} The computation of $\widehat \psi(a)$ requires the estimation of $\Pr[S = 1 | X]$, the population probability of trial participation, which is not the same as the probability of trial participation among sampled individuals, $\Pr[S = 1 | X , D = 1]$. Nevertheless, in the nested trial design with sub-sampling,
\begin{equation*}
\dfrac{\Pr[S = 1 | X]}{\Pr[S = 0 | X]} = \dfrac{\Pr[S = 1 | X, D = 1]}{\Pr[S = 0 | X, D = 1]} \times  c(X_1, S = 0),
\end{equation*}
and, clearly, the right-hand-side of the above expression is identifiable.

A straightforward estimation approach is to posit a parametric model for the population probability of trial participation, say, $\Pr[S = 1 | X] = p(X; \gamma)$ with finite dimensional parameter $\gamma$. We can estimate $\gamma$ by maximizing the pseudo-likelihood function
\begin{equation*}
\mathscr{L}(\gamma) = \prod\limits_{i=1}^n \big [ p(X_i; \gamma) \big ]^{S_i D_i} \big[ 1-  p(X_i; \gamma) \big]^{ (1 - S_i) D_i / c(X_{1i}, S_i)}.
\end{equation*}
Under reasonable technical conditions \cite{newey1994large, cosslett1981maximum}, the extremum estimator
\begin{equation*}
\widehat \gamma = \argmax_{\gamma \in \Gamma} \sum\limits_{i=1}^n \Bigg\{ S_i D_i \log p(X_i; \gamma) + \dfrac{(1 - S_i) D_i}{c(X_{1i}, S_i)} \log \big[ 1-  p(X_i; \gamma) \big] \Bigg\},
\end{equation*}
where $\Gamma$ is a compact parameter space, is a consistent estimator of $\gamma$, provided the population model is correctly specified. Thus, for example, using weighted regression of $S$ on $X$ among individuals with $D=1$ (i.e., randomized and sampled non-randomized individuals), with weights equal to $1 / c(X_1, S)$, we can obtain a consistent estimator of $\Pr[S = 1 | X]$, provided the parametric model $p(X; \gamma)$ is correctly specified \cite{manski1977estimation, cosslett1981maximum, scott1986fitting}.

\vspace{0.15in}
\noindent
\emph{Double robustness:} Suppose that $\widehat b_a(X_1, S)$, $\widehat g_a(X)$, and $\widehat p(X)$, have well-defined limiting values $b_a^*(X_1, S)$, $g_a^*(X)$, and $p^*(X)$, respectively. As we show in Appendix \ref{appendix:B_estimation_robustness}, $\widehat \psi(a)$ is \emph{doubly robust} in the following sense: $\widehat \psi(a)$ converges in probability to $\psi(a)$, that is, $\widehat \psi(a) \overset{p}{\longrightarrow} \psi(a)$, when \emph{either} $\widehat g_a(X) \overset{p}{\longrightarrow}  g_a^*(X) =  \E[Y|X, S = 1, A = a]$ \emph{or} $\widehat p(X) \overset{p}{\longrightarrow} p^*(X) = \Pr[S = 1 | X],$ but not necessarily both, and regardless of whether $b_a^*(X_1, S)$ is equal to $\E\big[\E [Y | X, S = 1, A =a] \big| X_1, S, D = 1 \big]$.

Last, we note that when there is no sub-sampling, that is when baseline covariate data on $X$ are collected from all non-randomized individuals, $c(X_{1}, S) =1$ and $D = 1$ in the entire sample, and the estimator becomes 
\begin{equation}\label{eq:estimator_nosub}
\widehat \psi_{\text{\tiny nosub}}(a) = \dfrac{1}{n} \sum\limits_{i=1}^n  \Biggl\{ \widehat g_a(X_i) + \dfrac{I(S_i = 1, A_i =a )}{\widehat p(X_i) e_a(X_i)} \Big\{ Y_i  - \widehat g_a(X_i) \Big\}  \Biggl\}.
\end{equation}
As we have shown before \cite{dahabreh2018generalizing}, this is the efficient estimator under the nested trial design without sub-sampling non-randomized individuals in the cohort.

\subsection{Inference}

In Appendix \ref{appendix:C_asymptotic_distribution} we derive the asymptotic distribution of $\widehat \psi(a)$ and thoroughly consider the impact of model misspecification. Here, we outline some key results. 

When $g_a(X)$ and $p(X)$ are consistently estimated using correctly specified models (and at sufficiently fast rate), $b_a(X_1, S)$ is estimated at $\sqrt{n}$-rate (even with a misspecified model), and regardless of whether $c(X_1, S)$ or $e_a(X)$ are estimated using correctly specified models (and at sufficiently fast rate) or are known, the estimator in (\ref{eq:estimator}) is \emph{locally efficient,} in the sense that it attains the variance bound under the semi-parametric model defined by the identifiability conditions, the sampling properties, and the additional model restrictions. 

In Appendix \ref{appendix:D_asymptotic_efficiency} we show that, under correct model specification, the estimator has asymptotic variance 
\begin{equation}
	\mbox{AVar}_1 = \E \left[ \dfrac{v_a(X)}{p(X) e_{a}(X)}  \right] + \mbox{Var} \big[ g_{a}(X) \big] + \E\left[ \dfrac{1 - c(X_1, S)}{c(X_1, S)} \Big\{ g_{a}(X) - b_{a}(X_1, S)      \Big\}^2 \right],
\end{equation}
where  $v_a(X) = \mbox{Var}[Y | X, S = 1, A = a]$ and all quantities are evaluated at the true law. The above result implies that, when models are correctly specified, the asymptotic variance of the efficient estimator for the nested trial design  sub-sampling in (\ref{eq:estimator}) is greater than or equal to the asymptotic variance of the efficient estimator for the nested trial design without sub-sampling non-randomized individuals in (\ref{eq:estimator_nosub}); see Appendix \ref{appendix:D_asymptotic_efficiency} for details. 

To construct Wald-style confidence intervals for $\psi(a)$, when using parametric models, we can easily obtain the sandwich estimator \cite{stefanski2002} of the sampling variance of the estimator in (\ref{eq:estimator}) by solving the appropriate estimating equation, jointly with the estimating equations for the parameters of the working models for $b_a(X_1, S)$, $g_a(X)$, and $p(X)$ \cite{lunceford2004}. Alternatively, we can use the non-parametric bootstrap \cite{efron1994introduction}.  The results presented in Appendix \ref{appendix:C_asymptotic_distribution} ensure that the bootstrap-based standard error estimator is valid \cite{praestgaard1993exchangeably}. 

A less computationally demanding large-sample $(1-\alpha)$\% confidence interval for $\psi(a)$ can be obtained \cite{vanderLaan2003} as $$ \widehat \psi(a) \pm  z_{1-\alpha/2} \widehat{SE}\Big[\widehat \psi(a)\Big],$$ where $z_{1-\alpha/2}$ is the $(1-\alpha/2)$th quantile of the standard normal distribution and $\widehat{SE}\Big[\widehat \psi(a)\Big]$ is given by $$\widehat{SE}\Big[\widehat \psi(a) \Big] =  \dfrac{1}{n}  \sqrt{\sum\limits_{i=1}^{n} \widehat{IC}_i^2 },$$ and $$ \widehat{IC}_i  =  \widehat b_a(X_{1i},S_i) + \dfrac{I(D_i = 1)}{ c(X_{1i}, S_i)} \Big\{ \widehat g_a(X_i)  - \widehat b_a(X_{1i},S_i) \Big\} + \dfrac{I(S_i = 1, A_i =a )}{\widehat p(X_i)  e_a(X_i)} \Big\{ Y_i  - \widehat g_a(X_i) \Big\}.$$

\section{Simulation study}

\subsection{Methods}

Building on our earlier work \cite{dahabreh2018generalizing}, we conducted a simulation study to examine the finite-sample performance of the estimator in (\ref{eq:estimator}) when sub-sampling non-randomized individuals, and compare it against the estimator in (\ref{eq:estimator_nosub}), without sub-sampling. We simulated scenarios using trials with an average sample size of 1000 individuals nested in cohorts of 2000, 5000, or 10,000 individuals and scenarios using trials with an average sample size of 2000 individuals nested in cohorts of 5000, 10,000, or 20,000 individuals. Appendix \ref{appendix:E_simulation_details} provides details about the scenarios we considered. 

\vspace{0.1in}
\noindent
\emph{Nested trial data generation:} We generated data for three baseline covariates, $Z_j, j = 1, 2,3$. Thus, in the notation of the previous section, $(Z_1, Z_2, Z_3) = X.$ For $Z_1$, we considered both continuous and binary distributions; for the continuous case, $Z_1 \sim \mathcal{N} (0,1)$; for the binary case, $Z_1 \sim \text{Bernoulli} (0.5) $. For $j=2,3$, we used $Z_j \sim \mathcal{N} (0,1)$, $i=1, ..., n$. We generated ``selection'' into the trial using a logistic linear model for the trial participation indicator, $$S \sim \text{Bernoulli} ( \Pr[S = 1 | Z ] ) \mbox{ with } \Pr[S = 1 | Z ] = \dfrac{\text{exp}(\gamma Z^T) }{1+ \text{exp}(\gamma Z^T)},$$ $Z = (1, Z_1, \ldots, Z_3)$, $\gamma = (\gamma_0, 1, 1,1 )$, and intercept $\gamma_0$ chosen for each $n$ such that it resulted in randomized trials with the desired average sample size (see Appendix \ref{appendix:E_simulation_details}). We determined the $\gamma_0$ for each scenario using the numerical methods described in \cite{Austin2013}. We generated an indicator of unconditionally randomized treatment assignment, $A$, among randomized individuals using a Bernoulli distribution with parameter  $ \Pr [A = 1 | S = 1] = 0.5$. We then generated continuous outcomes using linear potential outcome models with normally distributed errors: $Y^a =   \theta^a Z^T + \epsilon^a, \mbox{ for } a \in \{0,1\},$ where $\theta^{a} = (\theta^a_0, \ldots , \theta_3^a), a \in \{0,1\}$. We set $\theta^{0} = (1,1,1,1)$ and $\theta^1 = (0, 0, 0, 1)$ (i.e., effect modification by both $Z_1$ and $Z_2$). In all simulations, $\epsilon^a$ had a standard normal distribution for $a=0,1$. We generated observed outcomes as $ Y = A Y^1 + (1 - A) Y^0.$ 

\vspace{0.1in}
\noindent
\emph{Sub-sampling of non-randomized individuals:} We assumed that $Z_1$ was measured on all cohort members, but $Z_2$ and $Z_3$ were only measured on randomized individuals and sampled non-randomized individuals. In the notation of the previous section, $Z_1 = X_1$ and $(Z_2, Z_3) = X_2$. After generating the cohort data, we sub-sampled non-randomized observations ($S=0$) using a logistic linear model, $$D \big| Z, S = 0 \sim \text{Bernoulli} ( \Pr[D = 1 | Z_1 , S = 0] ) \mbox{ with } \Pr[D = 1 | Z_1 , S = 0] = \dfrac{\text{exp}(\zeta_0 + Z_1) }{1+ \text{exp}(\zeta_0 + Z_1)},$$ with the intercept, $\zeta_0$, chosen for each cohort sample size, $Z_1$ distribution (continuous or binary), and marginal probability of trial participation, such that it resulted in marginal sampling probabilities of non-randomized individuals, $\Pr[D = 1 | S = 0]$, ranging from 0.1 to 0.9, in steps of 0.1 (see Appendix Tables \ref{table:sample_size_scenarios_cont} and \ref{table:sample_size_scenarios_binary} for details). As for selective trial participation, we determined the $\zeta_0$ for each scenario using the numerical methods described in \cite{Austin2013}. We also considered a case where the sampling probability did not depend on baseline covariates, but instead a simple random sample of the non-randomized patients was taken. For all randomized individuals, we set $D = 1$ in all simulations.

\vspace{0.1in}
\noindent
\emph{Comparisons and performance measures:} In each simulated dataset, we applied estimator (\ref{eq:estimator}) to the sub-sampled data and estimator (\ref{eq:estimator_nosub}) to non-sub-sampled data. For each estimator, we estimated bias, variance, and mean squared error over 10,000 runs for each scenario. 

In the simulations, the working models for $g_a(Z) = \E[Y|Z, S = 1, A = a], e_a(Z) = \Pr[A = a | Z, S = 1]$, and $p(Z) = \Pr[S =1 | Z]$ were correctly specified, in the sense that the true model was included within the class of models under consideration. Specifically, models for the probability of participation in the trial and models for the probability of treatment included all main covariate effects; outcome models were fitted separately to each treatment group (i.e., allowed for all possible treatment by covariate interactions over all covariates). For $b_a(Z_1, S) = \E \big[ \E[Y | Z, S = 1, A = a]\big| Z_1, S, D = 1\big]$ we used the linear model $b_a(Z_1, S) = \xi_0 + \xi_1 Z_1 + \xi_2 S + \xi_3 Z_1 \times S$. We estimated $c(Z_1, S = 0) = \Pr[D = 1 | Z_1, S = 0]$ using a correctly specified logistic model with $Z_1$ as the only covariate, fit among non-randomized individuals; and we set $c(Z_1, S = 1) = \Pr[D = 1 | Z_1, S = 1] = 1$.

\subsection{Results}

Complete results from the simulation study are presented in Appendix Tables \ref{table:BIAS_continuous_results} through \ref{table:VARIANCE_binarySRS_results}. In all simulations, when all models were correctly specified, estimator (\ref{eq:estimator}), which uses the sub-sampled data, was nearly unbiased, for marginal sampling probabilities of non-randomized individuals ranging from 0.1 to 0.9, despite the presence of strong selection on baseline covariates and strong effect modification. As expected based on prior work \cite{dahabreh2018generalizing}, the estimator in (\ref{eq:estimator_nosub}), which uses the non-sub-sampled data, was also nearly unbiased.

Results for the sampling variance of the estimators for scenarios with an average trial sample size of 1000 observations are graphed in Figure \ref{fig:simulation_variance1000}; results for scenarios with an average trial sample size of 2000 observations are graphed in Appendix Figure \ref{fig:simulation_variance2000}. For both trial sample sizes and regardless of the cohort sample size, with increasing marginal sampling probability of non-randomized individuals, the variance of the estimator in (\ref{eq:estimator}) approached the variance of the estimator in (\ref{eq:estimator_nosub}) (the latter applied only to data without sub-sampling). In this simulation, the sampling variances of the two estimators were quite similar once the marginal sampling probability was greater than 0.3. 

Results were similar when the sampling probabilities depended on $Z_1$, a baseline covariate that was both an effect modifier and a strong predictor of trial participation (both when $Z_1$ was continuous and discrete), and when the sub-sampling did not depend on $Z_1$ (i.e., in the case of simple random sampling). Of note, in all simulation scenarios the sampling variance was larger with increasing cohort size, because, holding the trial sample size and the selection and sub-sampling mechanisms constant, the difference in the covariate distribution between randomized and non-randomized observations increases as the cohort sample size increases (i.e., as the marginal probability of trial participation decreases).

\section{The Coronary Artery Surgery Study (CASS)} \label{section_example}

\subsection{CASS design and data}

CASS was a comprehensive cohort study that compared coronary artery bypass grafting surgery plus medical therapy (henceforth, ``surgery'') versus medical therapy alone for individuals with chronic coronary artery disease; details about the design of CASS are available elsewhere \cite{william1983, investigators1984}. In brief, individuals undergoing angiography in 11 institutions were screened for eligibility and the 2099 trial-eligible individuals who met the study criteria were either randomized to surgery or medical therapy (780 individuals), or included in an observational study (1319 individuals). We excluded 6 individuals for consistency with prior CASS analyses and in accordance with CASS data release notes; in total we used data from 2093 individuals (778 randomized; 1315 non-randomized). Baseline covariates were collected from randomized and non-randomized individuals in an identical manner. No randomized individuals were lost to follow-up in the first 10 years of the study; we did not use information on adherence among randomized individuals, in effect assuming that the non-adherence would be similar among all eligible individuals. 

In \cite{dahabreh2018generalizing} we used these data to illustrate generalizability methods for nested trial designs \emph{without} sub-sampling. Here, we build on that work to illustrate the use of methods that are appropriate when the full covariate data is only obtained from a subset of non-randomized individuals. To do so, we \emph{emulated} the sub-sampling design under a variety of scenarios. We assumed that clinical covariates were measured on all cohort members (both randomized and non-randomized), but laboratory covariates were only measured on randomized individuals and sampled non-randomized individuals. We sub-sampled the non-randomized individuals using (1) covariate-dependent sampling of non-randomized individuals, where sampling depended on past history of myocardial infarction, such that individuals without a history of infarction had double the probability of being sampled compared to individuals with such history, \emph{and} marginal probability of sampling ranging from 0.1 to 0.7, in steps of 0.1 (sampling probabilities of 0.8 and 0.9 were not possible to implement while preserving the aforementioned relationship between the sampling probabilities of individuals with and without history of infarction because they corresponded to probabilities greater than 1 for one of these subgroups); and (2) simple random sampling of non-randomized individuals with probabilities that ranged from 0.1 to 0.9, in steps of 0.1.

\subsection{Statistical analysis}

\emph{Estimands and estimators:} We estimated the 10-year mortality risk under surgery and medical therapy, and the risk difference comparing the treatments for the target population of all trial-eligible individuals. We applied the estimator in (\ref{eq:estimator}) to sub-sampled data and compared it against the estimator in (\ref{eq:estimator_nosub}) applied to the original, non-sub-sampled, data. 

\emph{Working models:} We fit logistic regression models for the probability of participation in the trial among sub-sampled individuals, the probability of treatment among randomized individuals, and the probability of the outcome (in each treatment arm), conditional on \emph{clinical covariates} (age, severity of angina, history of previous myocardial infarction) and \emph{laboratory covariates} (percent obstruction of the proximal left anterior descending artery, left ventricular wall motion score, number of diseased vessels, and ejection fraction). We chose these variables based on a previous analysis of the same data \cite{olschewski1992}.

We considered three ways of obtaining the probability of sampling a non-randomized individual for use in (\ref{eq:estimator}): (1) use the ``design-based'' (known) sampling probabilities; (2) estimate the sampling probabilities, that is, use the empirical proportion in simple random sampling scenarios, or estimate the probability using a logistic regression model with history of infarction as the only covariates in covariate-dependent sampling scenarios; and (3) estimate the sampling probabilities using a logistic regression model that inlcuded all clinical covariates (even if not used to determine the sampling probabilities by design).

\emph{Missing baseline covariate data:} Of the 2093 trial-eligible individuals, 1686 had complete data on all baseline covariates (731 randomized, 368 in the surgery group and 363 in the medical therapy group; 955 non-randomized). In \cite{dahabreh2018generalizing} we undertook extensive missing data analyses under a missing at random assumption, which produced results very similar to those of the complete case analyses. For simplicity, here, we only report analyses restricted to individuals with complete data. 

\emph{Inference:} For all analyses, we used bootstrap resampling (with 10,000 samples) to estimate standard errors.

\subsection{Results}

We summarize comparisons as ratios of the estimated standard error of the estimator in (\ref{eq:estimator}), for each sub-sampling scenario and for each method of obtaining the sub-sampling probability, divided by the standard error of the estimator in (\ref{eq:estimator_nosub}) applied to the original CASS data. In general, except when the marginal sampling probability was less than 0.2, the standard errors were very similar, suggesting that sub-sampling non-randomized individuals in this example would not have adversely affected precision. Figure \ref{fig:cass_srs} and Appendix Figure \ref{fig:cass_dependent} summarize results from analyses under covariate-dependent and simple random sampling, respectively.

\section{Discussion}

We provide identification and estimation results for nested trial designs with sub-sampling of non-randomized individuals. These designs aim to support the generalization of causal inferences from randomized trials to the target population of all trial-eligible individuals, while improving research economy by limiting the collection of baseline covariates (in particular, covariates that are expensive to collect) to a subset of non-randomized individuals.

Nested trial designs will be increasingly implemented in conjunction with pragmatic randomized trials \cite{ford2016} because data from these trials can be linked with routinely collected (``real-world'') data (e.g., insurance claims or electronic health records). This linkage creates datasets that merge the trial data with routinely collected observational data, with a common set of baseline covariates available both from randomized and non-randomized trial-eligible individuals. In applications, the generalization of inferences from the randomized individuals to the target population of all trial-eligible individuals will often require information on covariates that are collected in the randomized trial but are not readily available in the routinely collected observational data (e.g., specialized imaging or laboratory tests). When additional data collection from non-randomized individuals is necessary, sub-sampling of non-randomized individuals for additional data collection, combined with efficient statistical estimation methods, can support inferences that are almost as precise as those possible by collecting data from all non-randomized individuals, but at a fraction of the cost. 

In our simulations and the CASS re-analysis, we found that the performance of our sub-sampling estimator quickly approached that of the efficient estimator under no sub-sampling as the marginal probability of sampling non-randomized individuals increased. We conjecture that this pattern should be expected in most cases because the main contribution of the sampled non-randomized individuals is in the estimation of the conditional covariate distribution, $F_{X|S}( x | S = 0)$; this conditional distribution enters the identification result in (\ref{eq:identification}), because $F_X(x) = \sum_{s} F_{X|S}( x | S = s)\Pr[S = s]$. Provided the total cohort sample size is fairly large and the sampling mechanism is reasonably chosen (e.g., sampling probabilities are away from 0), $F_{X|S}( x | S = 0)$ is estimated well even at low marginal sampling probabilities. Thus, in practical applications of nested trial designs with sub-sampling of non-randomized individuals, it will often be wise to focus resources on better estimating $\E[Y | X, S = 1, A = a]$ and $F_{X|S}( x | S = 1)$, by increasing the sample size of the randomized trial. Such focus will also serve to strengthen the trial-specific inferences that are the usual reason for conducting the trial in the first place \cite{dahabreh2018randomization}. 

In summary, we have described nested trial designs with sub-sampling of non-randomized individuals for generalizing causal inferences from randomized trials to the target population of all trial-eligible individuals. Our study of efficient and robust estimation methods for these designs suggests that sub-sampling can improve research economy without severely affecting precision.

\section{Acknowledgments}

We thank Professor Stephen Cole and the participants of the Causal Inference Research Group monthly meeting on March 1, 2019, at the University of North Carolina - Chapel Hill, for helpful comments on an earlier version of this work.

This work was supported in part by Patient-Centered Outcomes Research Institute (PCORI) awards ME-1306-03758 and ME-1502-27794 (Dahabreh); National Institutes of Health (NIH) grant R37 AI102634 (Hern\'an); Agency for Healthcare Research and Quality (AHRQ) National Research Service Award T32AGHS00001 (Robertson); and NIH grants DP2DA046856 and U54GM115677 (Buchanan). The content is solely the responsibility of the authors and does not necessarily represent the official views of PCORI, its Board of Governors, the PCORI Methodology Committee, the NIH, or AHRQ. The data analyses in our paper used CASS research materials obtained from the NHLBI Biologic Specimen and Data Repository Information Coordinating Center. This paper does not necessarily reflect the opinions or views of the CASS or the NHLBI.

\clearpage
\section{Figures}

\begin{figure}[!htbp]
\caption{Schematic of the observed data structure for nested trial designs with sub-sampling of non-randomized individuals. Gray shading indicates that the variable is not measured on some of the individuals in the cohort.}\label{fig:data_structure_schematic}
\centering
  \includegraphics[scale = 1.3]{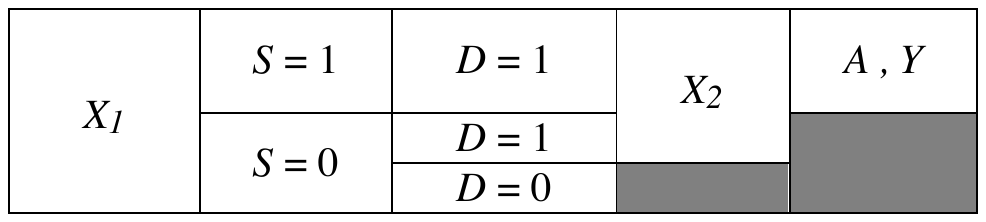}.
\end{figure}

\clearpage
\begin{figure}[!htbp]
\caption{Simulation results for the sampling variance of estimators for $\psi(a), a = 0,1$ and $\psi(1) - \psi(0)$, with average trial sample size of 1000 observations. Results in each panel are shown for different data generating mechanisms (binary or continuous $Z_1$) and sampling mechanisms (dependent on $Z_1$ or simple random sampling, SRS). In all panels, results are shown for $\widehat \psi(a)$ under marginal sampling probabilities ranging from 0.1 to 0.9, in steps of 0.1 (black markers); and for $\widehat \psi_{\text{\tiny nosub}}(a)$ under no sub-sampling (white markers). In each panel, results are shown for cohort sample sizes of 2000 (circles), 5000 (triangles), and 10,000 (squares) individuals.}\label{fig:simulation_variance1000}
\centering
  \includegraphics[scale = 1.8]{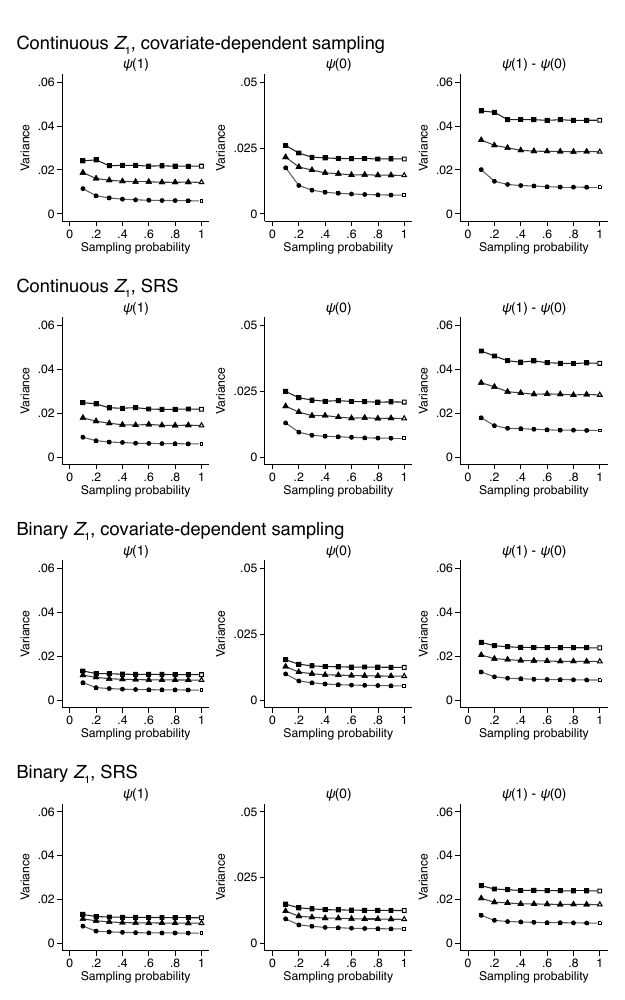}
\end{figure}

\clearpage
\begin{figure}[!htbp]
	\caption{CASS analysis results comparing the estimated standard errors of the estimator in (\ref{eq:estimator}) for $\psi(a), a = 0,1$ and $\psi(1) - \psi(0)$, under covariate dependent sampling with sampling probabilities of non-randomized individuals that depended on history of myocardial infarction, against the estimator in (\ref{eq:estimator_nosub}); see main text for details. In all panels, standard error ratios are shown for marginal sampling probabilities ranging from 0.1 to 0.7, in steps of 0.1 (in this analysis marginal probabilities of 0.8 or 0.9 were infeasible under the chosen covariate dependence relationship).}\label{fig:cass_dependent}
	\centering
	\includegraphics[scale = 1.8]{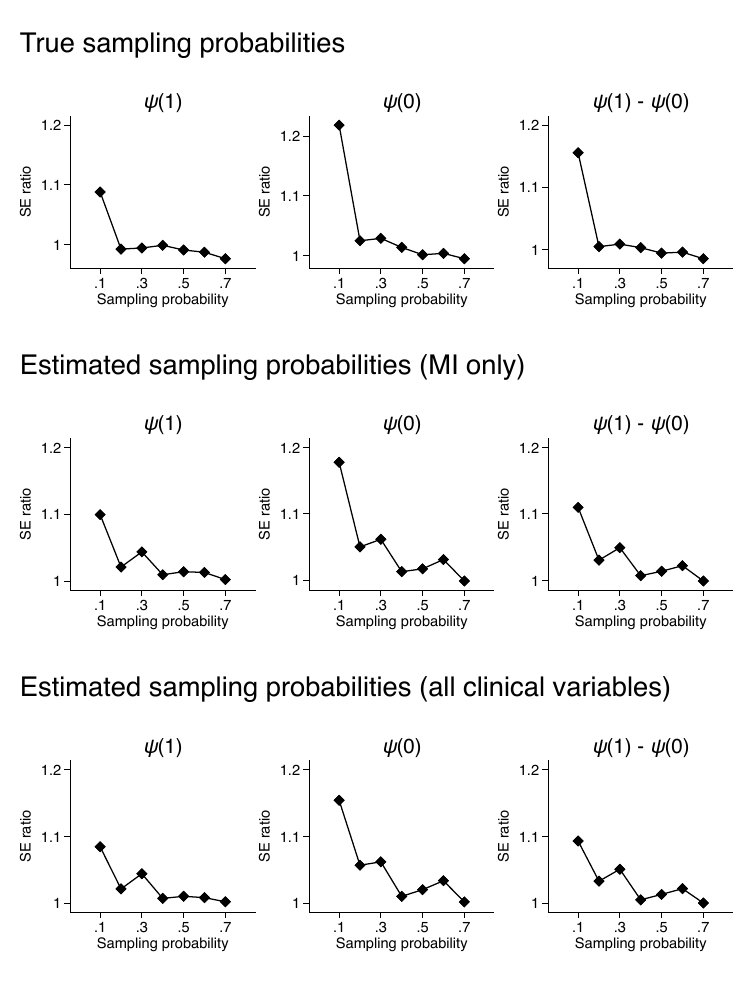}
\end{figure}

\clearpage
\bibliographystyle{unsrt}
\bibliography{subsampling_manuscript}

\begin{thebibliography}{10}

\bibitem{dahabreh2018generalizing}
Issa~J Dahabreh, Sarah~E Robertson, Eric J~Tchetgen Tchetgen, Elizabeth~A
  Stuart, and Miguel~A Hern{\'a}n.
\newblock Generalizing causal inferences from individuals in randomized trials
  to all trial-eligible individuals.
\newblock {\em Biometrics}, 2018.

\bibitem{olschewski1985}
M~Olschewski, H~Scheurlen, et~al.
\newblock Comprehensive cohort study: an alternative to randomized consent
  design in a breast preservation trial.
\newblock {\em Methods Archive}, 24:131--134, 1985.

\bibitem{olschewski1992}
Manfred Olschewski, Martin Schumacher, and Kathryn~B Davis.
\newblock Analysis of randomized and nonrandomized patients in clinical trials
  using the comprehensive cohort follow-up study design.
\newblock {\em Controlled {C}linical {T}rials}, 13(3):226--239, 1992.

\bibitem{choudhry2017}
Niteesh~K Choudhry.
\newblock Randomized, controlled trials in health insurance systems.
\newblock {\em New England Journal of Medicine}, 377(10):957--964, 2017.

\bibitem{cole2010}
Stephen~R Cole and Elizabeth~A Stuart.
\newblock Generalizing evidence from randomized clinical trials to target
  populations: the {A}{C}{T}{G} 320 trial.
\newblock {\em American {J}ournal of {E}pidemiology}, 172(1):107--115, 2010.

\bibitem{OMuircheartaigh2014}
Colm O'Muircheartaigh and Larry~V Hedges.
\newblock Generalizing from unrepresentative experiments: a stratified
  propensity score approach.
\newblock {\em Journal of the Royal Statistical Society. Series C (Applied
  Statistics)}, 63(2):195--210, 2014.

\bibitem{tipton2012}
Elizabeth Tipton.
\newblock Improving generalizations from experiments using propensity score
  subclassification assumptions, properties, and contexts.
\newblock {\em Journal of Educational and Behavioral Statistics},
  38(3):239--266, 2012.

\bibitem{hartman2013}
Erin Hartman, Richard Grieve, Roland Ramsahai, and Jasjeet~S Sekhon.
\newblock From {S}{A}{T}{E} to {P}{A}{T}{T}: combining experimental with
  observational studies to estimate population treatment effects.
\newblock {\em Journal of the Royal Statistical Society Series A (Statistics in
  Society)}, 10:1111, 2013.

\bibitem{lesko2017practical}
Catherine~R Lesko, Ashley~L Buchanan, Daniel Westreich, Jessie~K Edwards,
  Michael~G Hudgens, and Stephen~R Cole.
\newblock Practical considerations when generalizing study results: a potential
  outcomes perspective.
\newblock {\em Epidemiology}, 28(4):553--561, 2017.

\bibitem{rose2011targeted}
Sherri Rose and Mark~J van~der Laan.
\newblock A targeted maximum likelihood estimator for two-stage designs.
\newblock {\em The International Journal of Biostatistics}, 7(1):1--21, 2011.

\bibitem{william1983}
J~William, R~Russell, T~Nicholas, et~al.
\newblock Coronary artery surgery study ({C}{A}{S}{S}): a randomized trial of
  coronary artery bypass surgery.
\newblock {\em Circulation}, 68(5):939--950, 1983.

\bibitem{buchanan2018generalizing}
Ashley~L Buchanan, Michael~G Hudgens, Stephen~R Cole, Katie~R Mollan, Paul~E
  Sax, Eric~S Daar, Adaora~A Adimora, Joseph~J Eron, and Michael~J Mugavero.
\newblock Generalizing evidence from randomized trials using inverse
  probability of sampling weights.
\newblock {\em Journal of the Royal Statistical Society. Series A (Statistics
  in Society)}, 181(4):1193--1209, 2018.

\bibitem{weinberg1990design}
Clarice~R Weinberg and Sholom Wacholder.
\newblock The design and analysis of case-control studies with biased sampling.
\newblock {\em Biometrics}, pages 963--975, 1990.

\bibitem{weinberg1991randomized}
Clarice~R Weinberg and Dale~P Sandler.
\newblock Randomized recruitment in case-control studies.
\newblock {\em American journal of Epidemiology}, 134(4):421--432, 1991.

\bibitem{breslow2000semi}
Norman~E Breslow, James~M Robins, Jon~A Wellner, et~al.
\newblock On the semi-parametric efficiency of logistic regression under
  case-control sampling.
\newblock {\em Bernoulli}, 6(3):447--455, 2000.

\bibitem{kulich2004improving}
Michal Kulich and DY~Lin.
\newblock Improving the efficiency of relative-risk estimation in case-cohort
  studies.
\newblock {\em Journal of the American Statistical Association},
  99(467):832--844, 2004.

\bibitem{breslow2007weighted}
Norman~E Breslow and Jon~A Wellner.
\newblock Weighted likelihood for semiparametric models and two-phase
  stratified samples, with application to {C}ox regression.
\newblock {\em Scandinavian Journal of Statistics}, 34(1):86--102, 2007.

\bibitem{saegusa2013weighted}
Takumi Saegusa and Jon~A Wellner.
\newblock Weighted likelihood estimation under two-phase sampling.
\newblock {\em Annals of Statistics}, 41(1):269, 2013.

\bibitem{rubin1976inference}
Donald~B Rubin.
\newblock Inference and missing data.
\newblock {\em Biometrika}, 63(3):581--592, 1976.

\bibitem{rubin1974}
Donald~B Rubin.
\newblock Estimating causal effects of treatments in randomized and
  nonrandomized studies.
\newblock {\em Journal of {E}ducational {P}sychology}, 66(5):688, 1974.

\bibitem{robins2000d}
James~M Robins and Sander Greenland.
\newblock Causal inference without counterfactuals: comment.
\newblock {\em Journal of the American Statistical Association},
  95(450):431--435, 2000.

\bibitem{bickel1993efficient}
Peter~J Bickel, Chris~AJ Klaassen, Peter~J Bickel, Y~Ritov, J~Klaassen, Jon~A
  Wellner, and YA'Acov Ritov.
\newblock {\em Efficient and adaptive estimation for semiparametric models}.
\newblock Johns Hopkins University Press Baltimore, 1993.

\bibitem{robins1994estimation}
James~M Robins, Andrea Rotnitzky, and Lue~Ping Zhao.
\newblock Estimation of regression coefficients when some regressors are not
  always observed.
\newblock {\em Journal of the American Statistical Association},
  89(427):846--866, 1994.

\bibitem{robins1995semiparametric}
James~M Robins and Andrea Rotnitzky.
\newblock Semiparametric efficiency in multivariate regression models with
  missing data.
\newblock {\em Journal of the American Statistical Association},
  90(429):122--129, 1995.

\bibitem{lunceford2004}
Jared~K Lunceford and Marie Davidian.
\newblock Stratification and weighting via the propensity score in estimation
  of causal treatment effects: a comparative study.
\newblock {\em Statistics in Medicine}, 23(19):2937--2960, 2004.

\bibitem{williamson2014variance}
Elizabeth~J Williamson, Andrew Forbes, and Ian~R White.
\newblock Variance reduction in randomised trials by inverse probability
  weighting using the propensity score.
\newblock {\em Statistics in Medicine}, 33(5):721--737, 2014.

\bibitem{newey1994large}
Whitney~K Newey and Daniel McFadden.
\newblock Large sample estimation and hypothesis testing.
\newblock {\em Handbook of econometrics}, 4:2111--2245, 1994.

\bibitem{cosslett1981maximum}
Stephen~R Cosslett.
\newblock Maximum likelihood estimator for choice-based samples.
\newblock {\em Econometrica: Journal of the Econometric Society},
  49(5):1289--1316, 1981.

\bibitem{manski1977estimation}
Charles~F Manski and Steven~R Lerman.
\newblock The estimation of choice probabilities from choice based samples.
\newblock {\em Econometrica: Journal of the Econometric Society},
  45(8):1977--1988, 1977.

\bibitem{scott1986fitting}
Alastair~J Scott and CJ~Wild.
\newblock Fitting logistic models under case-control or choice based sampling.
\newblock {\em Journal of the Royal Statistical Society. Series B
  (Methodological)}, 48(2):170--182, 1986.

\bibitem{stefanski2002}
Leonard~A Stefanski and Dennis~D Boos.
\newblock The calculus of m-estimation.
\newblock {\em The American Statistician}, 56(1):29--38, 2002.

\bibitem{efron1994introduction}
Bradley Efron and Robert~J Tibshirani.
\newblock {\em An introduction to the bootstrap}.
\newblock CRC press, 1994.

\bibitem{praestgaard1993exchangeably}
Jens Pr{\ae}stgaard and Jon~A Wellner.
\newblock Exchangeably weighted bootstraps of the general empirical process.
\newblock {\em The Annals of Probability}, pages 2053--2086, 1993.

\bibitem{vanderLaan2003}
Mark~J Van~der Laan and James~M Robins.
\newblock {\em Unified methods for censored longitudinal data and causality}.
\newblock Springer Science \& Business Media, 2003.

\bibitem{Austin2013}
Peter~C Austin.
\newblock The performance of different propensity score methods for estimating
  marginal hazard ratios.
\newblock {\em Statistics in Medicine}, 32(16):2837--2849, 2013.

\bibitem{investigators1984}
{C}{A}{S}{S}~Principal Investigators.
\newblock Coronary artery surgery study ({C}{A}{S}{S}): a randomized trial of
  coronary artery bypass surgery: comparability of entry characteristics and
  survival in randomized patients and nonrandomized patients meeting
  randomization criteria.
\newblock {\em Journal of the American College of Cardiology}, 3(1):114--128,
  1984.

\bibitem{ford2016}
Ian Ford and John Norrie.
\newblock Pragmatic trials.
\newblock {\em New England Journal of Medicine}, 375(5):454--463, 2016.

\bibitem{dahabreh2018randomization}
Issa~J Dahabreh.
\newblock Randomization, randomized trials, and analyses using observational
  data: A commentary on deaton and cartwright.
\newblock {\em Social science \& medicine (1982)}, 210:41, 2018.

\bibitem{van1996weak}
Aad~W van~der Vaart and Jon~A Wellner.
\newblock {\em Weak Convergence and Empirical Processes}.
\newblock Springer, 1996.

\bibitem{kosorok2008introduction}
Michael~R Kosorok.
\newblock {\em Introduction to empirical processes and semiparametric
  inference.}
\newblock Springer, 2008.

\bibitem{pollard2012convergence}
David Pollard.
\newblock {\em Convergence of stochastic processes}.
\newblock Springer Science \& Business Media, 2012.

\bibitem{chernozhukov2018double}
Victor Chernozhukov, Denis Chetverikov, Mert Demirer, Esther Duflo, Christian
  Hansen, Whitney Newey, and James Robins.
\newblock Double/debiased machine learning for treatment and structural
  parameters.
\newblock {\em The Econometrics Journal}, 21(1):C1--C68, 2018.

\end{thebibliography}


\ddmmyyyydate 
\newtimeformat{24h60m60s}{\twodigit{\THEHOUR}.\twodigit{\THEMINUTE}.32}
\settimeformat{24h60m60s}
\begin{center}
\vspace{\fill}\ \newline
\textcolor{black}{{\tiny $ $transportability\_subsampling, $ $ }
{\tiny $ $Date: \today~~ \currenttime $ $ }
{\tiny $ $Revision: \paperversionmajor.\paperversionminor $ $ }}
\end{center}

\clearpage
\appendix

\section{Identification}\label{appendix:A_identification}

\begin{proposition}
Under the identifiability conditions I through V listed in the main text, 
\begin{equation*}
	\begin{split}
	\E[Y^a] = \E\big [\E[Y|X,S=1, A =a] \big].
	\end{split}
\end{equation*}
\end{proposition}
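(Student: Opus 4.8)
The plan is to peel off the conditioning events one at a time, moving from the target population down to the trial participants with $A=a$, invoking exactly one identifiability condition at each step, and then using consistency to replace the potential outcome by the observed outcome. The positivity conditions \emph{(III)} and \emph{(V)} play a background role throughout, guaranteeing that every conditional expectation I write down is well-defined on the relevant support.

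First I would apply the law of iterated expectations to obtain $\E[Y^a] = \E\big[ \E[Y^a \mid X] \big]$, where the outer expectation is over the distribution of $X$ in the target population. Next, since condition \emph{(V)} ensures $\Pr[S=1 \mid X=x] > 0$ for (almost every) $x$ in the support of $X$, the quantity $\E[Y^a \mid X, S=1]$ is well-defined, and condition \emph{(IV)} (mean generalizability) gives $\E[Y^a \mid X] = \E[Y^a \mid X, S=1]$. Then, using condition \emph{(III)} to ensure $\Pr[A=a \mid X=x, S=1] > 0$ for $x$ in the support of $X$ given $S=1$, condition \emph{(II)} (mean exchangeability among trial participants) gives $\E[Y^a \mid X, S=1] = \E[Y^a \mid X, S=1, A=a]$. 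Finally, condition \emph{(I)} (consistency) implies that on the event $\{A=a\}$ we have $Y^a = Y$, so $\E[Y^a \mid X, S=1, A=a] = \E[Y \mid X, S=1, A=a]$. Chaining these equalities and substituting back into the outer expectation yields $\E[Y^a] = \E\big[ \E[Y \mid X, S=1, A=a] \big]$, which is the claimed identification formula; rewriting the outer expectation as an integral against $F_X$ recovers the displayed form in equation~(\ref{eq:identification}).

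I do not anticipate a genuine obstacle, since the argument is essentially bookkeeping. The one point that deserves care is verifying, via \emph{(III)} and \emph{(V)}, that the nested conditioning events $\{S=1\}$ and then $\{S=1, A=a\}$ have positive probability on the supports over which the successive outer expectations integrate, so that each conditional mean in the chain is meaningful and the substitutions are not merely statements on null sets. Note also that, because $S$ is binary, the mean transportability form $\E[Y^a \mid X, S=1] = \E[Y^a \mid X, S=0]$ mentioned in the main text follows immediately from \emph{(IV)}, but it is not needed for this particular identification result.
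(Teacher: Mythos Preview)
Your proposal is correct and follows essentially the same four-step chain as the paper's proof: iterate expectations over $X$, apply mean generalizability (IV), apply mean exchangeability in the trial (II), and then consistency (I). Your additional commentary on the role of the positivity conditions (III) and (V) in ensuring the conditional expectations are well-defined is more explicit than the paper's terse proof, but the underlying argument is identical.
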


\begin{proof}
	\begin{equation*}
		\begin{split}
		\E[Y^a] &= \E \big[  \E [Y^a | X] \big] \\
				&= \E \big[  \E [Y^a | X, S = 1] \big] \\
				&= \E \big[  \E [Y^a | X, S = 1, A = a] \big] \\
				&= \E \big[  \E [ Y | X, S = 1, A = a] \big].
		\end{split}
	\end{equation*}
\end{proof}

\begin{proposition}
In the nested trial design with sub-sampling of non-randomized individuals, where $ c(X_1, S) \equiv \Pr[D = 1 | X_1, S ]$, and, by design, $\Pr[D = 1 | X, A, Y, S] = \Pr[D = 1 | X_1, S ]$, 
\begin{equation*}
	\begin{split}
	\E\big [\E[Y|X,S=1, A =a] \big]  &= \E\left[ \dfrac{I(D = 1)}{c(X_1, S)}  \E [ Y | X, S = 1, A = a ] \right] \\
	&= \E\Big[ \E \big[ \E[Y | X, S = 1, A = a] | X_1, S, D = 1  \big]    \Big].
	\end{split}
\end{equation*}
\end{proposition}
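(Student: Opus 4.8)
The plan is to fix $h_a(X) \equiv \E[Y \mid X, S = 1, A = a]$, viewed as a measurable (and, assuming the relevant integrability, integrable) function of $X = (X_1, X_2)$, and to establish the two equalities by repeated use of the law of iterated expectations, invoking the design-induced missing-at-random condition $D \indep (X, A, Y) \mid X_1, S$ and the positivity of the sampling function only where each is actually needed. The preliminary observation I would record is that, since $c(X_1, S) = I(S=1) + I(S=0)\,\Pr[D = 1 \mid X_1, S = 0]$ and $\Pr[D = 1 \mid X_1 = x_1, S = 0] > 0$ for every $x_1$ in the support of $X_1$ given $S = 0$, we have $c(X_1, S) > 0$ almost surely; hence every ratio written below is well defined, and the conditional expectation $\E[\,\cdot \mid X_1, S, D = 1]$ is meaningful.

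First I would prove the leftmost identity by conditioning on $(X, S)$. Since $X_1$ is a coordinate of $X$, the missing-at-random condition implies $\Pr[D = 1 \mid X, S] = \Pr[D = 1 \mid X_1, S] = c(X_1, S)$, and, because $h_a(X)/c(X_1, S)$ is $\sigma(X, S)$-measurable,
\[
\E\!\left[\frac{I(D=1)}{c(X_1, S)}\,h_a(X)\right]
= \E\!\left[\frac{h_a(X)}{c(X_1, S)}\,\E[I(D=1)\mid X, S]\right]
= \E\!\left[\frac{h_a(X)}{c(X_1, S)}\,c(X_1, S)\right]
= \E[h_a(X)],
\]
which is exactly $\E\big[\E[Y \mid X, S = 1, A = a]\big]$. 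This is the only step that uses the missing-at-random condition.

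Next I would prove the rightmost identity by instead conditioning on $(X_1, S)$ and writing $b_a(X_1, S) \equiv \E[h_a(X) \mid X_1, S, D = 1]$. Splitting the inner conditional expectation according to the value of $D$ (the $D = 0$ contribution vanishes because of the factor $I(D = 1)$),
\[
\E[I(D=1)\,h_a(X) \mid X_1, S]
= \Pr[D = 1 \mid X_1, S]\,\E[h_a(X) \mid X_1, S, D = 1]
= c(X_1, S)\,b_a(X_1, S),
\]
so that, by iterated expectations together with $c(X_1, S) > 0$,
\[
\E\!\left[\frac{I(D=1)}{c(X_1, S)}\,h_a(X)\right]
= \E\!\left[\frac{1}{c(X_1, S)}\,\E[I(D=1)\,h_a(X) \mid X_1, S]\right]
= \E[b_a(X_1, S)]
= \E\Big[\E\big[\E[Y \mid X, S = 1, A = a] \mid X_1, S, D = 1\big]\Big].
\]

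There is no deep obstacle here; the argument is essentially bookkeeping with iterated expectations. The points that genuinely require care are: (i) checking that every displayed ratio is almost surely well defined, which reduces entirely to the design positivity $c(X_1, S) > 0$; (ii) applying the missing-at-random condition with the correct conditioning $\sigma$-field, i.e.\ deducing $\Pr[D = 1 \mid X, S] = c(X_1, S)$ from $D \indep (X, A, Y) \mid X_1, S$, which is legitimate precisely because $X_1$ is a component of $X$; and (iii) noting that the rightmost equality is a pure tower-property identity that does not require the missing-at-random assumption at all, only the positivity that makes $\E[\,\cdot \mid X_1, S, D = 1]$ well defined.
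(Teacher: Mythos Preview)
Your proof is correct and follows essentially the same iterated-expectation bookkeeping as the paper. The one substantive difference is in how you handle the second equality: the paper invokes the missing-at-random condition $D \indep (X,A,Y) \mid X_1, S$ directly to replace $\E[h_a(X) \mid X_1, S, D = 1]$ by $\E[h_a(X) \mid X_1, S]$ and then applies the tower property, whereas you show that the identity between the inverse-probability-weighted expression and $\E[b_a(X_1,S)]$ is a pure tower-property fact requiring only positivity of $c(X_1,S)$, not MAR. Your version thus isolates MAR as being needed exactly once (for the first equality), which is a slightly sharper accounting of assumptions; the paper's route is marginally more direct for the second equality since it avoids passing through the weighted form. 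Both arguments are equally elementary.
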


\begin{proof}
For the first equality, 
\begin{equation*}
	\begin{split}
		\E\left[ \dfrac{I(D = 1)}{c(X_1, S)} \E [ Y | X, S = 1, A = a ] \right] &= \E\Bigg[  \E\left[ \dfrac{I(D = 1)}{c(X_1, S)}  \E [ Y | X, S = 1, A = a ]  \Big | X, A, Y, S  \right]   \Bigg]  \\
		&= \E\left[  \dfrac{\E [ Y | X, S = 1, A = a ]}{c(X_1, S)}   \E[ I(D = 1) | X, A, Y, S]  \right]  \\
		&= \E\big[ \E [ Y | X, S = 1, A = a ] \big],
	\end{split}
\end{equation*}
where the last equality follows from the sampling properties of nested-trial design, using $\Pr[D = 1 | X, A, Y, S] = \Pr[D = 1 | X_1, S] = c(X_1, S)$.

For the second equality,
\begin{equation*}
	\begin{split}
		\E\Big[ \E \big[ \E[Y | X, S = 1, A = a] | X_1, S, D = 1 \big] \Big] &= \E\Big[ \E \big[ \E[Y | X, S = 1, A = a] | X_1, S \big]    \Big]  \\
		&= \E\big[ \E [ Y | X, S = 1, A = a ] \big]
	\end{split}
\end{equation*}
where the first equality follows from the fact that, by design, $D \mathlarger{\mathlarger{\indep}} (X, A, Y) | X_1, S$.
\end{proof}

\begin{remark}
Our derivation for Proposition 2 only uses the sampling properties of the nested trial design with sub-sampling of non-randomized individuals and does not use any of the structural identifiability conditions. Thus, the result holds even when $\psi(a)$ does not have any causal interpretation.
\end{remark}

\clearpage
\section{Estimation and double robustness}\label{appendix:B_estimation_robustness}
\setcounter{equation}{0}
\renewcommand{\theequation}{B.\arabic{equation}}

\subsection{Influence function}\label{subsec:influence _function}

The influence function of $\psi(a)$ is 
\begin{equation*}
\Psi_0^1(a) =  b_{a}(X_{1}, S) + \dfrac{I(D = 1)}{c(X_{1}, S)} \Big\{  g_{a}(X)  -  b_{a}(X_{1}, S) \Big\} + \dfrac{I(S = 1, A =a )}{ p(X)  e_{a}(X)} \Big\{ Y - g_{a}(X) \Big\} - \psi(a),
\end{equation*}
where 
\begin{equation*}
  \begin{split}
    b_{a}(X_{1}, S) &= \E \big[ \E[Y | X, S = 1, A = a] | X_1, S, D = 1  \big], \\
    c(X_1, S) &= \Pr[D = 1 | X_1, S],\\
    g_{a}(X) &= \E[Y | X, S = 1, A = a], \\
    p(X) &= {\Pr}[S = 1 | X], \mbox{ and }\\
    e_{a}(X) &= {\Pr}[A = a | X, S = 1],
  \end{split}
\end{equation*}
and all quantities are evaluated at the ``true'' law.

\subsection{Connection with two-stage designs}

Applying the theory for two-stage designs from \cite{rose2011targeted}, we obtain the following expression for the influence function for nested trials with sub-sampling of non-randomized individuals:
\begin{equation}\label{eq:projection_if}
  \begin{split}
   \widetilde \Psi_0^1(a) = \dfrac{I(D = 1)}{c(X_1, S)} \Psi^{1*}_{0}(a) + \left\{ 1 - \dfrac{I(D = 1)}{c(X_1, S)} \right\} \E\big[ \Psi^{1*}_{0}(a) \big | X_1, S, D = 1 \big],
  \end{split}
\end{equation}
where $\Psi^{1*}_{0}(a)$ is the influence function under the nested trial design without sub-sampling (census) of non-randomized individuals \cite{dahabreh2018generalizing},
\begin{equation}\label{eq:IF_no_subsampling}
  \Psi^{1*}_{0}(a) = \dfrac{I(S = 1, A = a)}{p(X) e_{a}(X)} \Big\{ Y - g_{a}(X) \Big\} + g_{a}(X) - \psi(a),
\end{equation}
and, as before, all quantities are evaluated at the ``true'' law. We will now show that the above result is in agreement with our result in Section \ref{subsec:influence _function}.

\begin{proposition}
In the nested trial design with sub-sampling of non-randomized individuals, 
  \begin{equation}\label{eq:relationship_of_IFs}
    \Psi_0^1(a) =  \widetilde \Psi_0^1(a).
\end{equation}
\end{proposition}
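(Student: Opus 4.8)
The plan is to show the two influence functions coincide by substituting the explicit formula \eqref{eq:IF_no_subsampling} for $\Psi^{1*}_{0}(a)$ into the projection formula \eqref{eq:projection_if} and simplifying until we recover the expression for $\Psi_0^1(a)$ from Section \ref{subsec:influence _function}. First I would expand $\widetilde\Psi_0^1(a)$ by distributing the weight $I(D=1)/c(X_1,S)$ and the complementary weight $1 - I(D=1)/c(X_1,S)$ across the three terms of $\Psi^{1*}_{0}(a)$, namely the ``outcome residual'' term $\frac{I(S=1,A=a)}{p(X)e_a(X)}\{Y - g_a(X)\}$, the term $g_a(X)$, and the constant $-\psi(a)$. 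The $-\psi(a)$ pieces immediately recombine to $-\psi(a)$ since the two weights sum to $1$.

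The key simplifications are then three. First, for the outcome-residual term: since the study design guarantees $\Pr[D=1\mid X,A,Y,S=1]=1$, on the event $\{S=1,A=a\}$ we have $c(X_1,S)=c(X_1,1)=1$ and $I(D=1)=1$, so $\frac{I(D=1)}{c(X_1,S)}\cdot\frac{I(S=1,A=a)}{p(X)e_a(X)}\{Y-g_a(X)\}$ equals $\frac{I(S=1,A=a)}{p(X)e_a(X)}\{Y-g_a(X)\}$, and moreover the complementary-weight copy of this term vanishes because $1 - I(D=1)/c(X_1,S) = 0$ whenever $S=1$. This reproduces exactly the third term of $\Psi_0^1(a)$. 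Second, the conditional-expectation term: $\E[\Psi^{1*}_0(a)\mid X_1,S,D=1]$ — here the outcome-residual component has conditional mean zero (by mean exchangeability among trial participants, or more simply because $\E[Y - g_a(X)\mid X,S=1,A=a]=0$ and one iterates), leaving $\E[g_a(X)\mid X_1,S,D=1] - \psi(a) = b_a(X_1,S) - \psi(a)$. Third, collecting the remaining $g_a(X)$ pieces with weight $I(D=1)/c(X_1,S)$ against the $b_a(X_1,S)$ pieces carried by both weights gives $b_a(X_1,S) + \frac{I(D=1)}{c(X_1,S)}\{g_a(X) - b_a(X_1,S)\}$, which is the sum of the first two terms of $\Psi_0^1(a)$.

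I expect the main obstacle to be the conditional-expectation term $\E[\Psi^{1*}_0(a)\mid X_1,S,D=1]$: one must be careful that $\E[\frac{I(S=1,A=a)}{p(X)e_a(X)}\{Y - g_a(X)\}\mid X_1,S,D=1]=0$. The cleanest route is to condition further on $X$ (and on $S$, $A$) inside, use $\E[Y\mid X,S=1,A=a]=g_a(X)$ so the innermost conditional expectation is zero on $\{S=1,A=a\}$, and note the term is identically zero when $S=0$; then iterate expectations down to the $\sigma$-field generated by $(X_1,S)$, invoking $D\indep(X,A,Y)\mid X_1,S$ so that conditioning on $D=1$ adds nothing. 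Once that term is handled, everything else is bookkeeping, and the two expressions match term by term.
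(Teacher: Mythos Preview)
Your proposal is correct and follows essentially the same approach as the paper: substitute \eqref{eq:IF_no_subsampling} into \eqref{eq:projection_if}, use the design fact that $S=1$ forces $c(X_1,S)=1$ and $D=1$ to simplify the outcome-residual term, show the conditional expectation $\E[\Psi^{1*}_0(a)\mid X_1,S,D=1]$ reduces to $b_a(X_1,S)-\psi(a)$ by iterating expectations through $(X,A,S)$ and using $\E[Y\mid X,S=1,A=a]=g_a(X)$, and then collect terms. Your organization distributes across the three summands of $\Psi^{1*}_0(a)$ first rather than handling the two weighted blocks sequentially as the paper does, but the key steps and justifications are identical.
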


\begin{proof}
We begin work on the right-hand-side of equation (\ref{eq:relationship_of_IFs}), plugging the no-sub-sampling influence function from (\ref{eq:IF_no_subsampling}) into equation (\ref{eq:projection_if}):
\begin{equation*}
  \begin{split}
\widetilde \Psi_0^1(a) &= \dfrac{I(D = 1)}{c(X_1, S)} \left\{ \dfrac{I(S = 1, A = a)}{p(X) e_{a}(X)} \Big\{ Y - g_{a}(X) \Big\} + g_{a}(X) - \psi(a) \right\} \\
&\quad+ \left\{ 1 - \dfrac{I(D = 1)}{c(X_1, S)} \right\} \E\left[ \dfrac{I(S = 1, A = a)}{p(X) e_{a}(X)} \Big\{ Y - g_{a}(X) \Big\} + g_{a}(X) - \psi(a) \Bigg | X_1, S, D = 1 \right]
  \end{split}
\end{equation*}

Noting that if $ S=1 $, then $c(X_1, S) = 1$ and $D = 1$, we obtain
\begin{equation*}
  \begin{split}
\widetilde \Psi_0^1(a) &=  \dfrac{I(S = 1, A = a)}{p(X) e_{a}(X)} \Big\{ Y - g_{a}(X) \Big\} +   \dfrac{I(D = 1)}{c(X_1, S)} \Big\{ g_{a}(X) - \psi(a) \Big\} \\
&\quad+ \left\{ 1 - \dfrac{I(D = 1)}{c(X_1, S)} \right\} \E\left[ \dfrac{I(S = 1, A = a)}{p(X) e_{a}(X)} \Big\{ Y - g_{a}(X) \Big\} + g_{a}(X) - \psi(a) \Bigg | X_1, S, D = 1 \right].
  \end{split}
\end{equation*}

Next, we note that,
\begin{equation*}
  \begin{split}
&\E\left[\dfrac{I(S = 1, A = a)}{p(X) e_{a}(X)} \Big\{ Y - g_{a}(X) \Big\} + g_{a}(X) - \psi(a) \Bigg | X_1, S, D = 1 \right] \\
&\quad\quad\quad=\E\left[\rule{0cm}{0.9cm}  \E\left[ \dfrac{I(S = 1, A = a)}{p(X) e_{a}(X)} \Big\{ Y - g_{a}(X) \Big\}  \Bigg | X, A, S, D = 1 \right] X_1, S, D = 1  \right] \\
&\quad\quad\quad\quad\quad\quad\quad\quad\quad+ \E\left[  g_{a}(X) - \psi(a) \big| X_1, S, D = 1  \right] \\
&\quad\quad\quad= \E\big[  g_{a}(X) - \psi(a) \big | X_1, S, D = 1 \big]  \\
&\quad\quad\quad= b_{a}(X_1, S) - \psi(a).
  \end{split}
\end{equation*}

Using the above result, we see that
\begin{equation*}
  \begin{split}
\widetilde \Psi_0^1(a) &=  \dfrac{I(S = 1, A = a)}{p(X) e_{a}(X)} \Big\{ Y - g_{a}(X) \Big\} +   \dfrac{I(D = 1)}{c(X_1, S)} \Big\{ g_{a}(X) - \psi(a) \Big\} \\
&\quad\quad\quad+ \left\{ 1 - \dfrac{I(D = 1)}{c(X_1, S)} \right\} \Big\{ b_{a}(X_1, S) - \psi(a) \Big\} \\
&=  b_{a}(X_1, S) +   \dfrac{I(D = 1)}{c(X_1, S)} \Big\{ g_{a}(X) -  b_{a}(X_1, S) \Big\} + \dfrac{I(S = 1, A = a)}{p(X) e_{a}(X)} \Big\{ Y - g_{a}(X) \Big\} - \psi(a) \\
&=  \Psi_0^1(a),
  \end{split},
\end{equation*}
which completes the proof.
\end{proof}

\subsection{One-step in-sample estimator}\label{subsec:appendix_estiamtor}

The influence function in section \ref{subsec:influence _function} suggests the following in-sample one-step estimator
\begin{equation*}
	\begin{split}
\widehat \psi(a) &= \dfrac{1}{n} \sum\limits_{i=1}^n \Biggl\{ \widehat b_a(X_{1i}, S_i) + \dfrac{I(D = 1)}{c(X_{1i}, S_i)} \Big\{ \widehat g_a(X_i)  - \widehat b_a(S_i, X_{1i}) \Big\} + \dfrac{I(S_i = 1, A_i =a )}{\widehat p(X_i) e_a(X_i)} \Big\{ Y_i  - \widehat g_a(X_i) \Big\}  \Biggl\} \\
	&= \dfrac{1}{n} \sum\limits_{i=1}^n \left\{ \rule{0cm}{0.9cm} \Bigg\{ 1 - \dfrac{I(D_i = 1)}{c(X_{1i}, S_i)} \Bigg\} \widehat b_a(X_{1i}, S_i)  + \dfrac{I(D_i = 1)}{c(X_{1i}, S_i)} \widehat g_a(X_i)  + \dfrac{I(S_i = 1, A_i =a )}{\widehat p(X_i)  e_a(X_i)} \Big\{ Y_i  - \widehat g_a(X_i) \Big\}  \right\},
	\end{split}
\end{equation*}
where $c(X_{1}, S)$ and $ e_a(X)$ are known by design (or, alternatively, can be consistently estimated).

\subsection{Double robustness}

We now consider the behavior of the estimator in Section \ref{subsec:appendix_estiamtor}, using the ``true'' sampling probability, $c(X_1, S) = \Pr[D=1|X_1, S]$, and the ``true'' probability of treatment among randomized individual, $e_a(X) = \Pr[A = a | X, S = 1]$.

Suppose that $\widehat b_a(X_1, S) $, $\widehat g_a(X) $, and $\widehat p(X) $ have well-defined limiting values, which we denote as $b_a^*(X_1, S) $, $ g_a^*(X) $, and $ p^*(X) $, respectively. 

\begin{proposition}
In the nested trial design with sub-sampling of non-randomized individuals, $\widehat \psi(a)$ is doubly robust in the sense that, $\widehat \psi(a) \overset{p}{\longrightarrow} \psi(a) = \E \big[ \E[Y | X, S = 1, A = a] \big],$ when either $\widehat g_a(X) \overset{p}{\longrightarrow}  g_a^*(X) =  \E[Y|X, S = 1, A = a]$ or $\widehat p(X) \overset{p}{\longrightarrow} p^*(X) = \Pr[S = 1 | X].$
\end{proposition}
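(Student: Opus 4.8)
The plan is to take the probability limit of $\widehat\psi(a)$ directly and show it equals $\psi(a)$ under each of the two model-specification scenarios. Write $\widehat\psi(a)$ in the form from Section \ref{subsec:appendix_estiamtor}; by the law of large numbers (assuming the estimators of the nuisance functions converge in probability to their limits $b_a^*$, $g_a^*$, $p^*$, and that the known quantities $c(X_1,S)$ and $e_a(X)$ are used), the sample average converges in probability to
\begin{equation*}
  \E\!\left[ b_a^*(X_1,S) + \dfrac{I(D=1)}{c(X_1,S)}\Big\{ g_a^*(X) - b_a^*(X_1,S) \Big\} + \dfrac{I(S=1,A=a)}{p^*(X)\,e_a(X)}\Big\{ Y - g_a^*(X)\Big\} \right].
\end{equation*}
First I would simplify the first two terms without assuming anything about $b_a^*$: since $D\indep(X,A,Y)\mid X_1,S$ by design and $\E[I(D=1)\mid X_1,S]=c(X_1,S)$, iterating expectations gives $\E\big[\frac{I(D=1)}{c(X_1,S)}\{g_a^*(X)-b_a^*(X_1,S)\}\big] = \E\big[g_a^*(X) - b_a^*(X_1,S)\big]$, so the $b_a^*$ contributions cancel and the plim reduces to
\begin{equation*}
  \E\big[ g_a^*(X) \big] + \E\!\left[ \dfrac{I(S=1,A=a)}{p^*(X)\,e_a(X)}\Big\{ Y - g_a^*(X) \Big\} \right].
\end{equation*}
This already shows $b_a^*$ is irrelevant, matching the claim.

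Next I would handle the two cases. \emph{Case 1: $g_a^*(X) = \E[Y\mid X,S=1,A=a] = g_a(X)$.} Conditioning the second term on $(X,S,A)$ and using consistency of potential outcomes, the inner conditional expectation of $Y-g_a(X)$ given $S=1,A=a,X$ is zero, so the whole second term vanishes and the plim is $\E[g_a(X)] = \E\big[\E[Y\mid X,S=1,A=a]\big] = \psi(a)$ by the identification result (Proposition in Appendix \ref{appendix:A_identification} together with \eqref{eq:identification_subsampling}). \emph{Case 2: $p^*(X) = \Pr[S=1\mid X] = p(X)$.} Here I would show the second term corrects the bias of $\E[g_a^*(X)]$. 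Condition the term $\frac{I(S=1,A=a)}{p(X)e_a(X)}\{Y-g_a^*(X)\}$ first on $(X,S,A)$, using $\E[I(A=a)\mid X,S=1]=e_a(X)$ and $\E[I(S=1)\mid X]=p(X)$ and randomization (condition II) to peel off $Y$ as $g_a(X)$; this reduces the second term's expectation to $\E[g_a(X) - g_a^*(X)]$, which combines with $\E[g_a^*(X)]$ to give $\E[g_a(X)] = \psi(a)$ again, invoking the same identification chain, which in Case 2 requires condition IV (mean generalizability) to write $\E[g_a(X)]$ in terms of the target-population distribution — actually the identification result \eqref{eq:identification} already packages this, so I would just cite it.

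The main obstacle, such as it is, is bookkeeping the nested conditional expectations carefully in Case 2: one must condition in the right order ($(X,A,S)$ inside, then integrate) and correctly use that $D=1$ whenever $S=1$ so that the indicator $I(S=1,A=a)$ interacts cleanly with the sampling structure, and one must be explicit that the known functions $c$ and $e_a$ (or consistently estimated versions, per \cite{robins1994estimation, lunceford2004}) do not disturb the argument. There is no deep difficulty — the estimator was constructed from the efficient influence function precisely so that these cancellations occur — so the proof is essentially a disciplined application of iterated expectations plus the identification propositions already established.
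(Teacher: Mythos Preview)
Your proposal is correct and matches the paper's proof: both pass to the probability limit, eliminate the $b_a^*$ contribution via the sampling identity $\E[I(D=1)\mid X,S]=c(X_1,S)$, and then treat the two cases by iterated conditioning on $(X,S,A)$ to arrive at $\E[g_a(X)]=\psi(a)$. The only cosmetic difference is that in Case~2 the paper groups the two $g_a^*$ terms and cancels them directly (showing $\E[\{I(D=1)/c - I(S=1,A=a)/(p\,e_a)\}g_a^*]=0$), whereas you first reduce to $\E[g_a^*]$ and then show the weighting term supplies the correction $\E[g_a-g_a^*]$; note also that $\E[g_a(X)]=\psi(a)$ follows from Proposition~2 using sampling properties alone, so your aside about condition~IV is unnecessary.
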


\begin{proof}
As $n \rightarrow \infty$, we have that 
\begin{equation}\label{eq:estimator_limiting_value}
\widehat \psi(a) \overset{p}{\longrightarrow} \E  \left[ \rule{0cm}{0.9cm} \Bigg\{ 1 - \dfrac{I(D = 1)}{c(X_{1}, S)} \Bigg\} b_a^*(X_{1}, S)  + \dfrac{I(D = 1)}{c(X_{1}, S)} g_a^*(X)  + \dfrac{I(S = 1, A =a )}{p^*(X) e_a(X)} \Big\{ Y  - g_a^*(X) \Big\}  \right].
\end{equation}

First, we note that by design, $$\E\left[ \rule{0cm}{0.9cm} \Bigg\{ 1 - \dfrac{I(D = 1)}{c(X_{1}, S)} \Bigg\} b_a^*(X_{1}, S) \right] = 0, $$ regardless of whether $b_a^*(X_{1}, S) = \E \big[ \E[Y | X, S = 1, A = a] \big| X_1, S, D = 1 \big]$. 

Next, we study the expectation of the remaining two terms in (\ref{eq:estimator_limiting_value}) by examining cases.

\vspace{0.1in}
\noindent
\emph{Case 1: $g_a^*(X) = \E[Y | X, S = 1, A = a]$, but $p^*(X) \ne \Pr[S = 1 | X]$:} We have that 
\begin{equation*}
	\begin{split}
		\E  \Biggl[ \dfrac{I(D = 1)}{c(X_{1}, S)} g_a^*(X) & + \dfrac{I(S = 1, A =a )}{p^*(X) e_a(X)} \Big\{ Y  - g_a^*(X) \Big\}  \Biggl] \quad\quad\quad\quad\quad\quad\quad\quad\quad\quad\quad\quad\quad\\
		&= \E  \Biggl[  \dfrac{I(D = 1)}{c(X_{1}, S)} \E[Y | X, S = 1 , A = a ] \Biggl] \\
		&= \E  \left[ \rule{0cm}{0.9cm} \dfrac{\E[Y | X, S = 1 , A = a ]}{c(X_{1}, S)}  \E [ I(D = 1) | X , A, Y, S   \Bigg]  \right] \\
		&= \E  \big[  \E[Y | X, S = 1 , A = a ] \big]. 
	\end{split}
\end{equation*}
Thus, if $g_a^*(X) = \E[Y | X, S = 1, A = a]$, then $\widehat \psi(a) \overset{p}{\longrightarrow} \E  \big[  \E[Y | X, S = 1 , A = a ] \big]$.

\vspace{0.1in}
\noindent
\emph{Case 2: $p^*(X) = \Pr[S = 1 | X]$, but $ g_a^*(X) \ne \E[Y | X, S = 1, A = a]$:} We have that 
\begin{equation*}
	\begin{split}
		\E  \Biggl[ \dfrac{I(D = 1)}{c(X_{1}, S)} g_a^*(X) & + \dfrac{I(S = 1, A =a )}{p^*(X) e_a^*(X)} \Big\{ Y  - g_a^*(X) \Big\}  \Biggl] \quad\quad\quad\quad\quad\quad\quad\quad\quad\quad\quad\quad\quad\\
		&= \E  \left[ \rule{0cm}{0.9cm} \Biggl\{     \dfrac{I(D = 1)}{c(X_{1}, S)}  - \dfrac{I(S = 1, A =a )}{p^*(X) e_a(X)}   \Biggl\}  g_a^*(X) + \dfrac{I(S = 1, A =a )}{p^*(X) e_a(X)} Y   \right] \\
		&= \E  \Biggl[ \dfrac{I(S = 1, A =a )}{\Pr[S =1 | X] \Pr[A =a | X, S = 1]} Y   \Biggl] \\
		&= \E  \big[  \E[Y | X, S = 1 , A = a ] \big]. 
	\end{split}
\end{equation*}
Thus, if $p^*(X)  = \Pr[S = 1 | X]$, then $\widehat \psi(a) \overset{p}{\longrightarrow} \E  \big[  \E[Y | X, S = 1 , A = a ] \big]$.

Taken together, \emph{Cases 1 and 2} establish the double robustness of $\psi(a).$

\end{proof}

\begin{remark}
Consistently estimating the sampling probability, $c(X_1, S)$, and the probability of treatment among randomized individuals, $e_a(X)$, does not affect the double robustness of the estimator in Section \ref{subsec:appendix_estiamtor}. The reason is that these probabilities are under the control of the investigators and it is always possible to select estimators $\widehat c(X_1, S)$ and $\widehat e_a(X) $, that have well-defined limiting values, $ c^*(X_1, S)$ and $ e_a^*(X) $, respectively, such that $$\widehat c(X_1, S) \overset{p}{\longrightarrow} c^*(X_1, S) = \Pr[D = 1 | X_1, S]$$ and $$\widehat e_a(X) \overset{p}{\longrightarrow} e_a^*(X) = \Pr[A = a | X, S = 1].$$ 
\end{remark}

\clearpage
\section{Asymptotic distribution}\label{appendix:C_asymptotic_distribution}
\setcounter{equation}{0}
\renewcommand{\theequation}{C.\arabic{equation}}

Recall that 
\[
\widehat \psi(a) = \frac{1}{n} \sum_{i=1}^n \left\{  \widehat b_a(X_{1i},S_i) + \dfrac{I(D_i = 1)}{c(X_{1i}, S_i)}\Big\{\widehat g_a(X_i) - \widehat b_a(X_{1i},S_i)\Big\} + \dfrac{I(S_i =1, A_i =a)}{\widehat p(X_i) e_a(X_i)} \Big\{Y_i - \widehat g_a(X_i)\Big\} \right\}. 
\]
As before, $b_a^*(X_1,S), g_a^*(X)$, and $p^*(X)$ denote the asymptotic limits of the potentially misspecified models $\widehat b_a(X_1,S)$, $\widehat g_a(X)$, and $\widehat p(X)$, respectively. For any functions $b_a'(X_1,S)$, $g_a'(X)$, and $p'(X)$ define
\[
H(b_a',g_a',p') =  b_a'(X_{1},S) + \frac{I(D = 1)}{c(X_{1}, S)}\Big\{g_a'(X) - b_a'(X_{1},S)\Big\} + \frac{I(S =1, A =a)}{p'(X) e_a(X)} \Big\{Y - g_a'(X)\Big\}.
\]
Using standard empirical processes notation \cite{van1996weak}, denote 
\begin{align*}
&\mathbb{P}_n(H(b_a',g_a',p')) \\ &= \dfrac{1}{n} \sum_{i=1}^n \left\{  b_a'(X_{1i},S_i) + \dfrac{I(D_i = 1)}{c(X_{1i}, S_i)}\Big\{g_a'(X_i) - b_a'(X_{1i},S_i)\Big\} + \dfrac{I(S_i =1, A_i =a)}{p'(X_i) e_a(X_i)} \Big\{Y_i - g_a'(X_i)\Big\} \right\}
\end{align*}
and let $\mathbb{G}_n(H(b_a',g_a',p')) = \sqrt{n} (\mathbb{P}_n(H(b_a',g_a',p')) - \E[H(b_a',g_a',p')])$. Note that $\widehat \psi(a) = \mathbb{P}_n(H(\widehat b_a, \widehat g_a, \widehat p))$. 

The derivation of the asymptotic distribution relies on the following additional assumptions:
\begin{enumerate}
\item[A.1] The sequence $H(\widehat b_a,\widehat g_a, \widehat p)$ and the limit $H(b_a^*,g_a^*, p^*)$ fall in a Donsker class \cite{van1996weak}.
\item[A.2] We have $||H(\widehat b_a, \widehat g_a, \widehat p) - H(b_a^*,g_a^*, p^*)||_2 \rightarrow 0$.
\item[A.3] We have $\E[H(b_a^*,g_a^*, p^*)^2] < \infty$.
\end{enumerate}
If $\widehat b_a$, $\widehat g_a$, $\widehat p$, $b_a^*$, $g_a^*$, and $p^*$ all fall in a Donsker class; $p^*$ is uniformly bounded away from zero; and $g_a^*$ is uniformly bounded, then assumption A.1 follows from Corollary $9.31$ in \cite{kosorok2008introduction}. 

The following Proposition gives the asymptotic distribution of the estimator $\widehat \psi(a)$. 
\begin{proposition}
\label{Prop:Inf}
Under the assumptions made, in the nested trial design with sub-sampling of non-randomized individuals, 
\begin{equation*}
\sqrt{n}(\widehat \psi(a) - \psi(a)) =  \mathbb{G}_n(H(b_a^*, g_a^*, p^*)) + R + o_P(1),
\end{equation*}
where $\mathbb{G}_n(H(b_a^*, g_a^*, p^*))$ is asymptotically normal and
\[
R \leq \sqrt{n}O_P\left(||\widehat g_a(X) - \E[Y|X,S=1, A=a]||_2 ||\widehat \Pr[S=1,A=a|X] - \Pr[S=1,A=a|X]||_2\right).
\]
\end{proposition}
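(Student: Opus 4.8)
\emph{Proof strategy.} The plan is a standard von Mises (one-step) expansion for doubly robust estimators, combining an empirical-process linearization with a second-order ``drift'' calculation. Write $P$ for expectation under the true law, set $g_a(X) = \E[Y|X,S=1,A=a]$, and recall from Appendix~\ref{appendix:A_identification} that $\psi(a) = \E[g_a(X)]$. I would first decompose
\[
\sqrt n\big(\widehat\psi(a) - \psi(a)\big) = \mathbb{G}_n\big(H(\widehat b_a,\widehat g_a,\widehat p)\big) + \sqrt n\big(P H(\widehat b_a,\widehat g_a,\widehat p) - \psi(a)\big).
\]
For the first term, under the Donsker assumption A.1 and the $L^2$-consistency A.2, asymptotic equicontinuity of $\mathbb{G}_n$ over a Donsker class (e.g.\ \cite{van1996weak, kosorok2008introduction}) gives $\mathbb{G}_n(H(\widehat b_a,\widehat g_a,\widehat p)) = \mathbb{G}_n(H(b_a^*,g_a^*,p^*)) + o_P(1)$, while A.3 and the ordinary CLT make $\mathbb{G}_n(H(b_a^*,g_a^*,p^*))$ asymptotically normal with variance $\mbox{Var}[H(b_a^*,g_a^*,p^*)]$. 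The remaining term is what will be identified with $R$.

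The heart of the argument is to evaluate $P H(b_a',g_a',p')$ for arbitrary \emph{fixed} functions $b_a', g_a', p'$ and show that all first-order dependence on the nuisances cancels. I would proceed in three steps. (i) Because $c(X_1,S)$ is the true sampling probability and, by the design's missing-at-random property, $\E[I(D=1)|X,A,Y,S] = c(X_1,S)$, iterated expectations give $\E\big[\{1 - I(D=1)/c(X_1,S)\}\, b_a'(X_1,S)\big] = 0$, so the $b_a'$ terms drop out regardless of $b_a'$ (the extra robustness to the $b_a$ model). (ii) By the same property, $\E\big[I(D=1)\, g_a'(X)/c(X_1,S)\big] = \E[g_a'(X)]$. (iii) Conditioning on $X$ and using $\E[Y\,I(S=1,A=a)|X] = \Pr[S=1,A=a|X]\, g_a(X)$ together with $\E[I(S=1,A=a)|X] = \Pr[S=1,A=a|X]$ gives
\[
\E\left[\frac{I(S=1,A=a)}{p'(X)e_a(X)}\{Y - g_a'(X)\}\right] = \E\left[\frac{\Pr[S=1,A=a|X]}{p'(X)e_a(X)}\{g_a(X) - g_a'(X)\}\right].
\]
Adding (i)--(iii) and subtracting $\psi(a) = \E[g_a(X)]$ collapses everything to the product form
\[
P H(b_a',g_a',p') - \psi(a) = \E\left[\frac{\Pr[S=1,A=a|X] - p'(X)e_a(X)}{p'(X)e_a(X)}\,\{g_a(X) - g_a'(X)\}\right],
\]
which also re-derives the double robustness of Appendix~\ref{appendix:B_estimation_robustness}, since the right-hand side vanishes whenever $g_a' = g_a$ or $p'e_a = \Pr[S=1,A=a|X]$.

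To conclude, I would substitute $g_a' = \widehat g_a$ and $p'(X)e_a(X) = \widehat{\Pr}[S=1,A=a|X]$ (treating the fitted nuisances as fixed, which is legitimate under the Donsker argument, or by conditioning on an independent fold in a cross-fit version), so that $R = \sqrt n\big(P H(\widehat b_a,\widehat g_a,\widehat p) - \psi(a)\big)$ equals $\sqrt n$ times the displayed product expectation evaluated at $\widehat g_a$ and $\widehat{\Pr}$. Applying the Cauchy--Schwarz inequality, together with the positivity bound $\inf_x \widehat{\Pr}[S=1,A=a|X=x] \geq \delta > 0$ (which holds with probability tending to one given the positivity assumptions and the consistency of $\widehat p$), yields
\[
R \leq \sqrt n\, O_P\left( \|\widehat g_a(X) - \E[Y|X,S=1,A=a]\|_2 \,\|\widehat{\Pr}[S=1,A=a|X] - \Pr[S=1,A=a|X]\|_2\right),
\]
as claimed; a known or consistently estimated $e_a$ enters only through a bounded constant and does not affect this rate (cf.\ the Remark in Appendix~\ref{appendix:B_estimation_robustness}). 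I expect the main obstacle to be the empirical-process step --- rigorously replacing $H(\widehat b_a,\widehat g_a,\widehat p)$ by its limit inside $\mathbb{G}_n$, and, relatedly, justifying the ``plug in and treat as fixed'' manoeuvre in the drift term when the same data are used to fit the nuisances --- since this is exactly where assumptions A.1--A.3 (or a cross-fitting substitute) must be handled with care; the cancellations in (i)--(iii) are routine once the order of conditioning is fixed.
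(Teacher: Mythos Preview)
Your proposal is correct and follows essentially the same route as the paper: the identical decomposition into an empirical-process term (handled by Donsker equicontinuity under A.1--A.3) plus a drift term $\sqrt{n}\big(PH(\widehat b_a,\widehat g_a,\widehat p)-\psi(a)\big)$, followed by the same iterated-expectation cancellations and the same Cauchy--Schwarz bound on the resulting product form. The one minor difference is that you observe the $b_a'$ contribution to the drift vanishes \emph{exactly} for any function of $(X_1,S)$, whereas the paper splits this contribution as a fixed-limit piece plus a remainder $\sqrt{n}\,\E\big[\{\widehat b_a - b_a^*\}\{1 - I(D=1)/c(X_1,S)\}\big]$ and invokes a $\sqrt{n}$-rate for $\widehat b_a$ to control it; your shortcut is cleaner and shows that this extra rate assumption is not actually needed for this particular step.
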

\begin{proof}
Decompose $\sqrt{n}(\widehat \psi(a) - \psi(a))$ as
\begin{align*} 
\sqrt{n}(\widehat \psi(a) - \psi(a)) &= \left(\mathbb{G}_n(H(\widehat b_a, \widehat g_a, \widehat p)) - \mathbb{G}_n(H(b_a^*,g_a^*, p^*)) \right)  + \mathbb{G}_n(H(b_a^*, g_a^*, p^*)) \\
&\quad\quad\quad+ \sqrt{n}\left(\E[H(\widehat b_a,\widehat g_a, \widehat p)]) - \psi(a)\right).
\end{align*}
All convergence results presented here are in terms of $n\rightarrow \infty$. The proof relies on working with each of the terms on the right-hand-side of the above equation separately. 

For the first term in the decomposition of $\sqrt{n}(\widehat \psi(a) - \psi(a))$, by the Donsker property \cite{van1996weak,  pollard2012convergence} of $H(\widehat b_a, \widehat g_a, \widehat p)$ and $H(b_a^*,g_a^*, p^*)$, we have 
\[
\left(\mathbb{G}_n(H(\widehat b_a,\widehat g_a, \widehat p)) - \mathbb{G}_n(H(b_a^*,g_a^*, p^*)) \right) = o_P(1).
\]
The second term, $\mathbb{G}_n(H(b_a^*, g_a^*, p^*))$, in the decomposition is asymptotically normal by the central limit theorem. Hence, the asymptotic distribution of $\widehat \psi(a)$ depends on the behavior of the third term, $\sqrt{n}\left(\E[H(\widehat b_a, \widehat g_a, \widehat p)] - \psi(a)\right)$. We re-write the third term as
\begin{align*}
&\sqrt{n}\left(\E[H(\widehat b_a, \widehat g_a, \widehat p)]) - \psi(a) \right) \\ 
&\quad =  \sqrt{n} \left( \E\left[\widehat b_a(X_{1},S) + \frac{I(D = 1)}{c(X_{1}, S)}\Big\{\widehat g_a(X) - \widehat b_a(X_{1},S)\Big\} + \frac{I(S =1, A =a)}{\widehat p(X) e_a(X)} \Big\{Y - \widehat g_a(X)\Big\} \right] - \psi(a) \right)\\ 
&\quad = \underbrace{\sqrt{n} \E\left[\widehat b_a(X_{1},S) \left\{1 - \frac{I(D = 1)}{c(X_{1}, S)}\right\}\right]}_{R_1} \\
&\quad\quad\quad + \underbrace{\sqrt{n} \E\left[\frac{I(D = 1)}{c(X_{1}, S)} \widehat g_a(X) + \frac{I(S =1, A =a)}{\widehat p(X) e_a(X)} \Big\{Y - \widehat g_a(X)\Big\} - \psi(a) \right]}_{R}.
\end{align*}
First, we rewrite $R_1$ as
\begin{align*}
R_1 &= \sqrt{n} \E\left[b_a^*(X_{1},S) \left\{1 - \dfrac{I(D = 1)}{c(X_{1}, S)}\right\}\right] + \sqrt{n} \E\left[\Big\{\widehat b_a(X_{1},S) - b_a^*(X_{1},S)\Big\} \left\{1 - \frac{I(D = 1)}{c(X_{1}, S)}\right\}\right].
\end{align*}
As $\E\left[\dfrac{I(D = 1)}{c(X_{1}, S)}\right] = 1$, the first term on the right hand side is equal to zero. Assuming that $\sqrt{n}(\widehat b_a(X_{1},S) - b_a^*(X_{1},S)) = O_P(1)$, the second term is also $o_P(1)$. These arguments do not assume that the model $\widehat b_a(X_1,S)$ is correctly specified and the required $\sqrt{n}$ convergence (to a potentially misspecified limit) can always be obtained using a parametric model for $\widehat b_a(X_1, S)$.

Next, we rewrite $R$ as
\begin{align*}
R &= \sqrt{n} \E\left[\frac{I(D = 1)}{c(X_{1}, S)} \widehat g_a(X) + \frac{I(S =1, A =a)}{\widehat p(X) e_a(X)} \Big\{Y - \widehat g_a(X)\Big\} - \psi(a) \right] \\
&= \sqrt{n} \E\left[\Big\{\widehat g_a(X) - \E[Y|X,S=1, A=a]\Big\} + \frac{\Pr[S=1,A=a|X]}{\widehat \Pr[S=1,A=a|X]} \Big\{\E[Y|X,S=1, A=a] - \widehat g_a(X)\Big\} \right] \\
&= \sqrt{n} \E\left[\Big\{\widehat g_a(X) - \E[Y|X,S=1, A=a]\Big\} \left\{1 - \frac{\Pr[S=1,A=a|X]}{\widehat \Pr[S=1,A=a|X]}\right\} \right] \\
& \leq \sqrt{n}O_P(||\widehat g_a(X) - \E[Y|X,S=1, A=a]||_2 ||\widehat \Pr[S=1,A=a|X] - \Pr[S=1,A=a|X]||_2),
\end{align*} 
where the last line follows from the Cauchy-Schwarz inequality and the boundedness of $\widehat \Pr[S=1,A=a|X]$ away from zero.
\end{proof}

\begin{remark}
The term $R$ in Proposition \ref{Prop:Inf} identifies how the estimators of the nuisance parameters $\widehat g_a(X)$ and $\widehat{Pr}[S=1,A=a|X]$ affect the distribution of $\widehat \psi(a)$. If the nuisance parameters converge to the true population parameters at a rate
\begin{equation*}
\sqrt{n}||\widehat g_a(X) - \E[Y|X,S=1, A=a]||_2 ||\Pr[S=1,A=a|X] - \widehat \Pr[S=1, A=a|X]||_2 = o_P(1),
\end{equation*}
the term $R$ in the proposition does not contribute to the asymptotic variance of the estimator. If the estimators $\widehat g_a(X)$ and $\widehat p(X)$ come from the class of generalized linear models, the estimators are Donsker and have a fast enough convergence rate for $R$ to be $o_p(1)$ if both models are correct and $O_P(1)$ if at least one model is correct (the doubly robustness property previously discussed). 
If more data adaptive estimators that do not necessarily converge at a fast enough rate are used to calculate the nuisance parameters $\widehat g_a$, and $\widehat p$, sample splitting can be used to control the behavior of $R$ \cite{chernozhukov2018double}.
\end{remark}

\clearpage
\section{Asymptotic efficiency}\label{appendix:D_asymptotic_efficiency}
\setcounter{equation}{0}
\renewcommand{\theequation}{D.\arabic{equation}}

As we have shown previously \cite{dahabreh2018generalizing}, the estimator in (\ref{eq:estimator_nosub}), in the absence of sub-sampling, when the models for $g_{a}(X)$ and $p(X)$ are correctly specified, has asymptotic variance
\begin{equation}\label{eq:avar1}
	\mbox{AVar}_1 = n^{-1} \Biggl\{  \E \left[ \dfrac{v_{a}(X)}{p(X) e_{a}(X)}  \right] + \mbox{Var} \big[ g_{a}(X) \big] \Biggl\},
\end{equation}
where $v_{a}(X) = \mbox{Var}[Y|X, S = 1, A = a]$; $p(X)$, $e_{a}(X)$, and $g_{a}(X)$ are as defined above; and all quantities are evaluated at the true law.

Furthermore, using the influence function result in Appendix \ref{appendix:B_estimation_robustness}, we obtain, via routine algebraic manipulation,
\begin{equation*}
	\begin{split}
	\E \big [\{ \psi_0^1(a) \}^2 \big] &= \E \left[ \dfrac{v_{a}(X)}{p(X) e_{a}(X)}  \right] + \mbox{Var} \big[ g_{a}(X) \big] + \E\left[ \dfrac{1 - c(X_1, S)}{c(X_1, S)} \Big\{ g_{a}(X)   -  b_{a}(X_1, S)      \Big\}^2 \right].
	\end{split}
\end{equation*}
Thus, when $g_a(X)$ and $p(X)$ are consistently estimated using correctly specified models (and at sufficiently fast rate), and $b_a(X_1, S)$ is estimated at $\sqrt{n}$-rate (even with a misspecified model), the estimator in (\ref{eq:estimator}) has asymptotic variance
\begin{equation}\label{eq:avar2}
	\begin{split}
	\mbox{AVar}_2 &= \mbox{AVar}_1 + n^{-1}  \E\left[ \dfrac{1 - c(X_1, S)}{c(X_1, S)} \Big\{ g_{a}(X)   -  b_{a}(X_1, S)      \Big\}^2 \right].
	\end{split}
\end{equation}
Comparing (\ref{eq:avar1}) and (\ref{eq:avar2}), we see that $$\mbox{AVar2} \geq \mbox{AVar1}.$$

\clearpage
\section{Additional information about the simulation study}\label{appendix:E_simulation_details}

\setcounter{table}{0}
\renewcommand{\thetable}{E.\arabic{table}}

\setcounter{figure}{0}
\renewcommand{\thefigure}{E.\arabic{figure}}

{
\singlespacing

\begin{table}[H]
\centering
\caption{Scenarios considered in the simulation study, for covariate dependent sampling probabilities and continuous $X_1$. For each cohort sample size ($n$) and average trial sample size, we provide the $\gamma_0$ values that result in the desired marginal probability of trial participation, $\Pr[S=1]$, and the $\zeta_0$ values that result in the desired marginal sampling probabilities among non-randomized individuals, $\Pr[D = 1 | X_1, S = 0]$.} \label{table:sample_size_scenarios_cont}
\begin{tabular}{llccc}
\toprule
\multirow{2}{*}{\begin{tabular}[c]{@{}l@{}}Average\\ trial size\end{tabular}} & \multicolumn{1}{c}{\multirow{2}{*}{$n$}} & \multirow{2}{*}{$\Pr[S=1]$} & \multirow{2}{*}{$\gamma_0$} & \multirow{2}{*}{\begin{tabular}[c]{@{}c@{}}$\zeta_0$ values for marginal sampling probabilities\\ ranging from 0.1 to 0.9, in steps of 0.1\end{tabular}}    \\
                                                                              & \multicolumn{1}{c}{}                     &                             &                             &                                                                                                                                                             \\ \midrule
\multirow{3}{*}{1000}                                                         & 2000                                      & 0.5                         & 0                           & \begin{tabular}[c]{@{}c@{}}-2.1953125, -1.2929688, -0.6761070,\\ -0.1483765, 0.3237305, 0.8099365,\\ 1.3345490, 1.9550781, 2.8554688\end{tabular}           \\ \cline{2-5} 
                                                                              & 5000                                      & 0.2                         & -2.055969                   & \begin{tabular}[c]{@{}c@{}}-2.3974609, -1.4904175, -0.8675537,\\ -0.3417969, 0.1408870, 0.6245117,\\ 1.1464232, 1.7731247, 2.6875000\end{tabular}           \\ \cline{2-5} 
                                                                              & 10,000                                    & 0.1                         & -3.154297                   & \begin{tabular}[c]{@{}c@{}}-2.47167969, -1.56103516, -0.93359375,\\  -0.40990990, 0.07421875, 0.56357574,\\ 1.08593750, 1.71679688, 2.62744141\end{tabular} \\ \hline
\multirow{3}{*}{2000}                                                         & 5000                                      & 0.4                         & -0.612793                   & \begin{tabular}[c]{@{}c@{}}-2.2607422, -1.3611903, -0.7357330,\\  -0.2153320, 0.2639160, 0.7441406,\\ 1.2669601, 1.8906250, 2.7956066\end{tabular}          \\ \cline{2-5} 
                                                                              & 10,000                                    & 0.2                         & -2.055969                   & \begin{tabular}[c]{@{}c@{}}-2.3974609, -1.4904175, -0.8675537,\\  -0.3417969, 0.1408870, 0.6245117, \\ 1.1464232, 1.7731247, 2.6875000\end{tabular}         \\ \cline{2-5} 
                                                                              & 20,000                                    & 0.1                         & -3.154297                   & \begin{tabular}[c]{@{}c@{}}-2.47167969, -1.56103516, -0.93359375,\\ -0.40990990, 0.07421875, 0.56357574,\\ 1.08593750, 1.71679688, 2.62744141\end{tabular}  \\ \bottomrule
\end{tabular}
\end{table}

}

{
\singlespacing

\begin{table}[H]
\centering
\caption{Scenarios considered in the simulation study, for covariate dependent sampling probabilities and binary $X_1$. For each cohort sample size ($n$) and average trial sample size, we provide the $\gamma_0$ values that result in the desired marginal probability of trial participation, $\Pr[S=1]$, and the $\zeta_0$ values that result in the desired marginal sampling probabilities among non-randomized individuals, $\Pr[D = 1 | X_1, S = 0]$.} \label{table:sample_size_scenarios_binary}
\begin{tabular}{llccc}
\toprule
\multirow{2}{*}{\begin{tabular}[c]{@{}l@{}}Average\\ trial size\end{tabular}} & \multicolumn{1}{c}{\multirow{2}{*}{$n$}} & \multirow{2}{*}{$\Pr[S=1]$} & \multirow{2}{*}{$\gamma_0$} & \multicolumn{1}{c}{\multirow{2}{*}{\begin{tabular}[c]{@{}c@{}}$\zeta_0$ values for marginal sampling probabilities\\ ranging from 0.1 to 0.9, in steps of 0.1\end{tabular}}} \\
                                                                              & \multicolumn{1}{c}{}                     &                             &                             & \multicolumn{1}{c}{}                                                                                                                                                  \\ \midrule
\multirow{3}{*}{1000}                                                         & 2000                                      & 0.5                         & -0.4973936                  & \begin{tabular}[c]{@{}l@{}}-2.70117188, -1.87109375, -1.30666184, \\ -0.83532715, -0.40234375, 0.02183144, \\ 0.48690367, 1.04687500, 1.88330555\end{tabular}          \\ \cline{2-5} 
                                                                              & 5000                                      & 0.2                         & -2.4145508                  & \begin{tabular}[c]{@{}l@{}}-2.7578125, -1.9258423, -1.3582602,\\ -0.8906250, -0.4609375, -0.0312500,\\ 0.4363470, 0.9983544, 1.8328857\end{tabular}                    \\ \cline{2-5} 
                                                                              & 10,000                                    & 0.1                         & -3.460083                   & \begin{tabular}[c]{@{}l@{}}-2.77343750, -1.93980408, -1.37890625, \\ -0.91027832, -0.48046875, -0.05080032,\\ 0.41790675, 0.98059082, 1.81640625\end{tabular}          \\ \hline
\multirow{3}{*}{2000}                                                         & 5000                                      & 0.4                         & -1.072715                   & \begin{tabular}[c]{@{}l@{}}-2.7205811, -1.8925781, -1.3217773,\\ -0.8562012, -0.4282227, 0.0078125,\\ 0.4677734, 1.0312500, 1.8676951\end{tabular}                     \\ \cline{2-5} 
                                                                              & 10,000                                    & 0.2                         & -2.4145508                  & \begin{tabular}[c]{@{}l@{}}-2.7578125, -1.9258423, -1.3582602, \\ -0.8906250, -0.4609375 -0.0312500,\\ 0.4363470, 0.9983544, 1.8328857\end{tabular}                    \\ \cline{2-5} 
                                                                              & 20,000                                    & 0.1                         & -3.460083                   & \begin{tabular}[c]{@{}l@{}}-2.77343750, -1.93980408, -1.37890625,\\ -0.91027832, -0.48046875, -0.05080032,\\ 0.41790675, 0.98059082, 1.81640625\end{tabular}           \\ \bottomrule
\end{tabular}
\end{table}

}

\vspace{0.2in}
\noindent
\emph{Note:} simple random sampling of non-randomized individuals, numerical methods are not needed to solve for $\zeta_0$ and the numerical solutions for $\gamma_0$ are the same as in the above two tables.


\clearpage

\begin{sidewaystable}[]
\caption{Bias estimates for simulation scenarios with continuous $Z_1$ and covariate dependent sampling.} \label{table:BIAS_continuous_results}
\resizebox{\textwidth}{!}{
\begin{tabular}{ccccccccccccc}
\toprule
\multirow{2}{*}{\begin{tabular}[c]{@{}l@{}}Target\\ parameter\end{tabular}} & \multirow{2}{*}{\begin{tabular}[c]{@{}l@{}}Average\\ trial size\end{tabular}} & \multirow{2}{*}{$n$} & \multicolumn{10}{c}{Marginal sampling probability among non-randomized individuals}                                                                                                                   \\ \cline{4-13}
                                                                              &               &       & 0.1            & 0.2            & 0.3            & 0.4            & 0.5            & 0.6            & 0.7            & 0.7            & 0.9            & 1                                            \\ \midrule
\multirow{6}{*}{$\psi(1)$} &	1000 &	2000 &	0.0001 &	-0.0002 &	-0.0002 &	-0.0002 &	-0.0004 &	-0.0005 &	-0.0003 &	-0.0004 &	-0.0003 &	-0.0003 \\
 & 	1000 &	5000 &	0.0014 &	0.0023 &	0.0015 &	0.0013 &	0.0009 &	0.0010 &	0.0011 &	0.0010 &	0.0011 &	0.0011 \\
 & 	1000 &	10000 &	-0.0035 &	-0.0020 &	-0.0021 &	-0.0022 &	-0.0025 &	-0.0022 &	-0.0023 &	-0.0023 &	-0.0023 &	-0.0023 \\
 & 	2000 &	5000 &	-0.0001 &	0.0005 &	0.0002 &	-0.0002 &	-0.0001 &	0.0003 &	0.0001 &	0.0002 &	0.0001 &	0.0001 \\
 & 	2000 &	10000 &	0.0007 &	0.0001 &	0.0002 &	0.0001 &	0.0002 &	0.0002 &	0.0002 &	0.0003 &	0.0002 &	0.0002 \\
 & 	2000 &	20000 &	0.0015 &	0.0020 &	0.0016 &	0.0017 &	0.0016 &	0.0018 &	0.0018 &	0.0017 &	0.0017 &	0.0017 \\ \midrule
\multirow{6}{*}{$\psi(0)$} &	1000 &	2000 &	0.0011 &	0.0006 &	-0.0000 &	-0.0004 &	-0.0005 &	-0.0007 &	-0.0004 &	-0.0006 &	-0.0004 &	-0.0004 \\
 & 	1000 &	5000 &	0.0019 &	0.0022 &	0.0016 &	0.0015 &	0.0011 &	0.0011 &	0.0011 &	0.0011 &	0.0011 &	0.0012 \\
 & 	1000 &	10000 &	-0.0030 &	-0.0010 &	-0.0016 &	-0.0015 &	-0.0017 &	-0.0014 &	-0.0015 &	-0.0015 &	-0.0014 &	-0.0014 \\
 & 	2000 &	5000 &	-0.0004 &	0.0007 &	-0.0002 &	-0.0003 &	-0.0003 &	0.0000 &	-0.0001 &	-0.0001 &	-0.0002 &	-0.0002 \\
 & 	2000 &	10000 &	0.0019 &	0.0010 &	0.0010 &	0.0009 &	0.0008 &	0.0009 &	0.0010 &	0.0010 &	0.0011 &	0.0009 \\
 & 	2000 &	20000 &	-0.0018 &	-0.0004 &	-0.0011 &	-0.0010 &	-0.0011 &	-0.0011 &	-0.0010 &	-0.0010 &	-0.0011 &	-0.0011 \\  \midrule
\multirow{6}{*}{$\psi(1) - \psi(0)$} &	1000 &	2000 &	-0.0010 &	-0.0008 &	-0.0002 &	0.0002 &	0.0001 &	0.0002 &	0.0001 &	0.0002 &	0.0001 &	0.0001 \\
 & 	1000 &	5000 &	-0.0005 &	0.0002 &	-0.0001 &	-0.0002 &	-0.0003 &	-0.0000 &	-0.0000 &	-0.0001 &	0.0000 &	-0.0001 \\
 & 	1000 &	10000 &	-0.0005 &	-0.0010 &	-0.0005 &	-0.0008 &	-0.0009 &	-0.0008 &	-0.0008 &	-0.0008 &	-0.0009 &	-0.0009 \\
 & 	2000 &	5000 &	0.0003 &	-0.0002 &	0.0004 &	0.0001 &	0.0002 &	0.0003 &	0.0003 &	0.0003 &	0.0003 &	0.0003 \\
 & 	2000 &	10000 &	-0.0012 &	-0.0009 &	-0.0009 &	-0.0008 &	-0.0006 &	-0.0007 &	-0.0008 &	-0.0007 &	-0.0009 &	-0.0007 \\
 & 	2000 &	20000 &	0.0034 &	0.0024 &	0.0027 &	0.0027 &	0.0028 &	0.0028 &	0.0027 &	0.0027 &	0.0028 &	0.0028 \\ \bottomrule
\end{tabular}
}
\end{sidewaystable}


\clearpage

\begin{sidewaystable}[]
\caption{Bias estimates for simulation scenarios with continuous $Z_1$ and simple random sampling.} \label{table:BIAS_continuousSRS_results}
\resizebox{\textwidth}{!}{
\begin{tabular}{ccccccccccccc}
\toprule
\multirow{2}{*}{\begin{tabular}[c]{@{}l@{}}Target\\ parameter\end{tabular}} & \multirow{2}{*}{\begin{tabular}[c]{@{}l@{}}Average\\ trial size\end{tabular}} & \multirow{2}{*}{$n$} & \multicolumn{10}{c}{Marginal sampling probability among non-randomized individuals}                                                                                                                   \\ \cline{4-13}
                                                                              &               &       & 0.1            & 0.2            & 0.3            & 0.4            & 0.5            & 0.6            & 0.7            & 0.7            & 0.9            & 1                                            \\ \midrule
\multirow{6}{*}{$\psi(1)$} &	1000 &	2000 &	0.0002 &	-0.0007 &	-0.0003 &	-0.0003 &	-0.0003 &	-0.0004 &	-0.0003 &	-0.0004 &	-0.0003 &	-0.0003 \\
 & 	1000 &	5000 &	0.0016 &	0.0022 &	0.0015 &	0.0011 &	0.0008 &	0.0011 &	0.0011 &	0.0011 &	0.0011 &	0.0011 \\
 & 	1000 &	10000 &	-0.0020 &	-0.0023 &	-0.0021 &	-0.0021 &	-0.0023 &	-0.0023 &	-0.0024 &	-0.0023 &	-0.0023 &	-0.0023 \\
 & 	2000 &	5000 &	-0.0009 &	-0.0017 &	-0.0012 &	-0.0012 &	-0.0012 &	-0.0012 &	-0.0013 &	-0.0012 &	-0.0012 &	-0.0012 \\
 & 	2000 &	10000 &	-0.0003 &	-0.0001 &	0.0000 &	0.0000 &	-0.0000 &	-0.0002 &	-0.0001 &	-0.0001 &	-0.0001 &	-0.0001 \\
 & 	2000 &	20000 &	0.0032 &	0.0033 &	0.0035 &	0.0035 &	0.0035 &	0.0032 &	0.0034 &	0.0034 &	0.0034 &	0.0034 \\ \midrule
\multirow{6}{*}{$\psi(0)$} &	1000 &	2000 &	0.0012 &	-0.0012 &	-0.0007 &	-0.0003 &	-0.0003 &	-0.0005 &	-0.0002 &	-0.0005 &	-0.0004 &	-0.0004 \\
 & 	1000 &	5000 &	0.0014 &	0.0016 &	0.0018 &	0.0015 &	0.0010 &	0.0011 &	0.0011 &	0.0012 &	0.0011 &	0.0012 \\
 & 	1000 &	10000 &	-0.0024 &	-0.0013 &	-0.0019 &	-0.0015 &	-0.0016 &	-0.0014 &	-0.0014 &	-0.0015 &	-0.0014 &	-0.0014 \\
 & 	2000 &	5000 &	0.0018 &	0.0003 &	0.0013 &	0.0008 &	0.0010 &	0.0009 &	0.0009 &	0.0009 &	0.0010 &	0.0009 \\
 & 	2000 &	10000 &	0.0008 &	0.0007 &	0.0008 &	0.0006 &	0.0005 &	0.0004 &	0.0006 &	0.0006 &	0.0006 &	0.0006 \\
 & 	2000 &	20000 &	0.0038 &	0.0041 &	0.0038 &	0.0041 &	0.0036 &	0.0036 &	0.0038 &	0.0037 &	0.0038 &	0.0037 \\ \midrule
\multirow{6}{*}{$\psi(1) - \psi(0)$} &	1000 &	2000 &	-0.0009 &	0.0005 &	0.0004 &	-0.0000 &	-0.0000 &	0.0002 &	-0.0001 &	0.0001 &	0.0001 &	0.0001 \\
 & 	1000 &	5000 &	0.0002 &	0.0006 &	-0.0003 &	-0.0004 &	-0.0002 &	0.0000 &	0.0001 &	-0.0001 &	-0.0000 &	-0.0001 \\
 & 	1000 &	10000 &	0.0004 &	-0.0010 &	-0.0002 &	-0.0006 &	-0.0007 &	-0.0009 &	-0.0009 &	-0.0009 &	-0.0009 &	-0.0009 \\
 & 	2000 &	5000 &	-0.0028 &	-0.0020 &	-0.0025 &	-0.0021 &	-0.0022 &	-0.0021 &	-0.0023 &	-0.0021 &	-0.0022 &	-0.0022 \\
 & 	2000 &	10000 &	-0.0011 &	-0.0008 &	-0.0007 &	-0.0005 &	-0.0006 &	-0.0005 &	-0.0007 &	-0.0007 &	-0.0008 &	-0.0007 \\
 & 	2000 &	20000 &	-0.0006 &	-0.0008 &	-0.0003 &	-0.0006 &	-0.0001 &	-0.0004 &	-0.0004 &	-0.0003 &	-0.0004 &	-0.0003 \\ \bottomrule
\end{tabular}
}
\end{sidewaystable}


\clearpage

\begin{sidewaystable}[]
\caption{Bias estimates for simulation scenarios with binary $Z_1$ and covariate-dependent sampling.} \label{table:BIAS_binary_results}
\resizebox{\textwidth}{!}{
\begin{tabular}{ccccccccccccc}
\toprule
\multirow{2}{*}{\begin{tabular}[c]{@{}l@{}}Target\\ parameter\end{tabular}} & \multirow{2}{*}{\begin{tabular}[c]{@{}l@{}}Average\\ trial size\end{tabular}} & \multirow{2}{*}{$n$} & \multicolumn{10}{c}{Marginal sampling probability among non-randomized individuals}                                                                                                                   \\ \cline{4-13}
                                                                              &               &       & 0.1            & 0.2            & 0.3            & 0.4            & 0.5            & 0.6            & 0.7            & 0.7            & 0.9            & 1                                            \\ \midrule
\multirow{6}{*}{$\psi(1)$} &	1000 &	2000 &	0.0007 &	0.0011 &	0.0005 &	0.0009 &	0.0010 &	0.0007 &	0.0009 &	0.0008 &	0.0008 &	0.0007 \\
 & 	1000 &	5000 &	0.0001 &	-0.0000 &	-0.0002 &	-0.0005 &	-0.0003 &	-0.0003 &	-0.0003 &	-0.0002 &	-0.0002 &	-0.0002 \\
 & 	1000 &	10000 &	-0.0011 &	-0.0014 &	-0.0011 &	-0.0013 &	-0.0012 &	-0.0012 &	-0.0011 &	-0.0013 &	-0.0012 &	-0.0012 \\
 & 	2000 &	5000 &	-0.0004 &	-0.0000 &	-0.0004 &	0.0000 &	-0.0000 &	-0.0001 &	-0.0002 &	-0.0002 &	-0.0001 &	-0.0001 \\
 & 	2000 &	10000 &	0.0018 &	0.0017 &	0.0018 &	0.0015 &	0.0016 &	0.0017 &	0.0017 &	0.0015 &	0.0016 &	0.0016 \\
 & 	2000 &	20000 &	0.0008 &	0.0009 &	0.0008 &	0.0008 &	0.0006 &	0.0006 &	0.0007 &	0.0006 &	0.0006 &	0.0007 \\ \midrule
\multirow{6}{*}{$\psi(0)$} &	1000 &	2000 &	0.0002 &	0.0016 &	0.0006 &	0.0014 &	0.0012 &	0.0010 &	0.0009 &	0.0008 &	0.0009 &	0.0009 \\
 & 	1000 &	5000 &	-0.0000 &	-0.0008 &	-0.0004 &	-0.0008 &	-0.0007 &	-0.0006 &	-0.0007 &	-0.0006 &	-0.0006 &	-0.0006 \\
 & 	1000 &	10000 &	0.0003 &	-0.0007 &	0.0001 &	-0.0002 &	-0.0001 &	-0.0002 &	-0.0002 &	-0.0002 &	-0.0002 &	-0.0002 \\
 & 	2000 &	5000 &	0.0001 &	0.0008 &	-0.0000 &	0.0004 &	0.0003 &	0.0002 &	0.0003 &	0.0002 &	0.0003 &	0.0003 \\
 & 	2000 &	10000 &	-0.0005 &	-0.0000 &	0.0000 &	-0.0006 &	-0.0004 &	-0.0004 &	-0.0004 &	-0.0006 &	-0.0005 &	-0.0005 \\
 & 	2000 &	20000 &	-0.0002 &	-0.0003 &	-0.0005 &	-0.0004 &	-0.0006 &	-0.0004 &	-0.0005 &	-0.0005 &	-0.0006 &	-0.0005 \\ \midrule
\multirow{6}{*}{$\psi(1) - \psi(0)$} &	1000 &	2000 &	0.0005 &	-0.0005 &	-0.0001 &	-0.0004 &	-0.0002 &	-0.0003 &	0.0000 &	0.0000 &	-0.0001 &	-0.0001 \\
 & 	1000 &	5000 &	0.0001 &	0.0007 &	0.0002 &	0.0002 &	0.0003 &	0.0003 &	0.0004 &	0.0004 &	0.0004 &	0.0003 \\
 & 	1000 &	10000 &	-0.0014 &	-0.0007 &	-0.0012 &	-0.0011 &	-0.0012 &	-0.0010 &	-0.0009 &	-0.0011 &	-0.0010 &	-0.0010 \\
 & 	2000 &	5000 &	-0.0006 &	-0.0008 &	-0.0003 &	-0.0004 &	-0.0003 &	-0.0003 &	-0.0004 &	-0.0004 &	-0.0003 &	-0.0004 \\
 & 	2000 &	10000 &	0.0023 &	0.0018 &	0.0018 &	0.0021 &	0.0020 &	0.0021 &	0.0020 &	0.0021 &	0.0021 &	0.0021 \\
 & 	2000 &	20000 &	0.0010 &	0.0012 &	0.0014 &	0.0012 &	0.0012 &	0.0011 &	0.0012 &	0.0011 &	0.0012 &	0.0012 \\ \bottomrule
\end{tabular}
}
\end{sidewaystable}


\clearpage

\begin{sidewaystable}[]
\caption{Bias estimates for simulation scenarios with binary $Z_1$ and simple random sampling.} \label{table:BIAS_binarySRS_results}
\resizebox{\textwidth}{!}{
\begin{tabular}{ccccccccccccc}
\toprule
\multirow{2}{*}{\begin{tabular}[c]{@{}l@{}}Target\\ parameter\end{tabular}} & \multirow{2}{*}{\begin{tabular}[c]{@{}l@{}}Average\\ trial size\end{tabular}} & \multirow{2}{*}{$n$} & \multicolumn{10}{c}{Marginal sampling probability among non-randomized individuals}                                                                                                                   \\ \cline{4-13}
                                                                              &               &       & 0.1            & 0.2            & 0.3            & 0.4            & 0.5            & 0.6            & 0.7            & 0.7            & 0.9            & 1                                            \\ \midrule
\multirow{6}{*}{$\psi(1)$} &	1000 &	2000 &	0.0005 &	0.0008 &	0.0007 &	0.0010 &	0.0010 &	0.0007 &	0.0009 &	0.0008 &	0.0008 &	0.0007 \\
 & 	1000 &	5000 &	-0.0001 &	-0.0000 &	0.0001 &	-0.0005 &	-0.0004 &	-0.0003 &	-0.0003 &	-0.0002 &	-0.0002 &	-0.0002 \\
 & 	1000 &	10000 &	-0.0011 &	-0.0013 &	-0.0012 &	-0.0013 &	-0.0012 &	-0.0011 &	-0.0012 &	-0.0013 &	-0.0012 &	-0.0012 \\
 & 	2000 &	5000 &	-0.0001 &	-0.0000 &	-0.0004 &	-0.0004 &	-0.0003 &	-0.0007 &	-0.0007 &	-0.0005 &	-0.0006 &	-0.0006 \\
 & 	2000 &	10000 &	0.0028 &	0.0029 &	0.0028 &	0.0028 &	0.0027 &	0.0027 &	0.0027 &	0.0028 &	0.0027 &	0.0028 \\
 & 	2000 &	20000 &	0.0015 &	0.0010 &	0.0013 &	0.0010 &	0.0010 &	0.0010 &	0.0011 &	0.0011 &	0.0011 &	0.0011 \\ \midrule
\multirow{6}{*}{$\psi(0)$} &	1000 &	2000 &	-0.0003 &	0.0011 &	0.0006 &	0.0014 &	0.0011 &	0.0008 &	0.0010 &	0.0008 &	0.0008 &	0.0009 \\
 & 	1000 &	5000 &	0.0003 &	-0.0004 &	-0.0002 &	-0.0007 &	-0.0007 &	-0.0006 &	-0.0006 &	-0.0006 &	-0.0006 &	-0.0006 \\
 & 	1000 &	10000 &	0.0003 &	-0.0005 &	0.0003 &	-0.0003 &	-0.0002 &	-0.0001 &	-0.0001 &	-0.0002 &	-0.0001 &	-0.0002 \\
 & 	2000 &	5000 &	0.0007 &	0.0008 &	0.0006 &	0.0005 &	0.0002 &	0.0001 &	0.0000 &	0.0001 &	0.0002 &	0.0001 \\
 & 	2000 &	10000 &	0.0008 &	0.0014 &	0.0012 &	0.0015 &	0.0013 &	0.0014 &	0.0014 &	0.0013 &	0.0012 &	0.0013 \\
 & 	2000 &	20000 &	0.0028 &	0.0022 &	0.0025 &	0.0023 &	0.0021 &	0.0021 &	0.0022 &	0.0022 &	0.0022 &	0.0022 \\ \midrule
\multirow{6}{*}{$\psi(1) - \psi(0)$} &	1000 &	2000 &	0.0007 &	-0.0003 &	0.0001 &	-0.0004 &	-0.0001 &	-0.0002 &	-0.0000 &	-0.0000 &	-0.0000 &	-0.0001 \\
 & 	1000 &	5000 &	-0.0004 &	0.0004 &	0.0003 &	0.0001 &	0.0004 &	0.0003 &	0.0003 &	0.0004 &	0.0003 &	0.0003 \\
 & 	1000 &	10000 &	-0.0014 &	-0.0008 &	-0.0015 &	-0.0010 &	-0.0010 &	-0.0010 &	-0.0011 &	-0.0011 &	-0.0011 &	-0.0010 \\
 & 	2000 &	5000 &	-0.0008 &	-0.0008 &	-0.0009 &	-0.0009 &	-0.0006 &	-0.0008 &	-0.0007 &	-0.0006 &	-0.0008 &	-0.0007 \\
 & 	2000 &	10000 &	0.0020 &	0.0015 &	0.0016 &	0.0013 &	0.0014 &	0.0014 &	0.0013 &	0.0015 &	0.0014 &	0.0015 \\
 & 	2000 &	20000 &	-0.0013 &	-0.0012 &	-0.0012 &	-0.0013 &	-0.0011 &	-0.0012 &	-0.0011 &	-0.0011 &	-0.0011 &	-0.0011 \\ \bottomrule
\end{tabular}
}
\end{sidewaystable}


\clearpage

\begin{sidewaystable}[]
\caption{Variance estimates for simulation scenarios with continuous $Z_1$ and covariate dependent sampling.} \label{table:VARIANCE_continuous_results}
\resizebox{\textwidth}{!}{
\begin{tabular}{ccccccccccccc}
\toprule
\multirow{2}{*}{\begin{tabular}[c]{@{}l@{}}Target\\ parameter\end{tabular}} & \multirow{2}{*}{\begin{tabular}[c]{@{}l@{}}Average\\ trial size\end{tabular}} & \multirow{2}{*}{$n$} & \multicolumn{10}{c}{Marginal sampling probability among non-randomized individuals}                                                                                                                   \\ \cline{4-13}
                                                                              &               &       & 0.1            & 0.2            & 0.3            & 0.4            & 0.5            & 0.6            & 0.7            & 0.7            & 0.9            & 1                                            \\ \midrule
\multirow{6}{*}{$\psi(1)$} &	1000 &	2000 &	0.0115 &	0.0082 &	0.0072 &	0.0066 &	0.0063 &	0.0062 &	0.0060 &	0.0060 &	0.0059 &	0.0059 \\
 & 	1000 &	5000 &	0.0187 &	0.0161 &	0.0153 &	0.0148 &	0.0145 &	0.0145 &	0.0143 &	0.0143 &	0.0144 &	0.0143 \\
 & 	1000 &	10000 &	0.0241 &	0.0245 &	0.0219 &	0.0220 &	0.0220 &	0.0217 &	0.0218 &	0.0217 &	0.0217 &	0.0217 \\
 & 	2000 &	5000 &	0.0064 &	0.0048 &	0.0044 &	0.0042 &	0.0040 &	0.0039 &	0.0039 &	0.0039 &	0.0038 &	0.0038 \\
 & 	2000 &	10000 &	0.0095 &	0.0080 &	0.0078 &	0.0076 &	0.0076 &	0.0074 &	0.0074 &	0.0074 &	0.0074 &	0.0074 \\
 & 	2000 &	20000 &	0.0123 &	0.0117 &	0.0114 &	0.0112 &	0.0112 &	0.0111 &	0.0111 &	0.0111 &	0.0111 &	0.0111 \\ \midrule
\multirow{6}{*}{$\psi(0)$} &	1000 &	2000 &	0.0175 &	0.0108 &	0.0090 &	0.0081 &	0.0078 &	0.0075 &	0.0073 &	0.0072 &	0.0071 &	0.0071 \\
 & 	1000 &	5000 &	0.0215 &	0.0178 &	0.0166 &	0.0155 &	0.0152 &	0.0147 &	0.0148 &	0.0146 &	0.0146 &	0.0146 \\
 & 	1000 &	10000 &	0.0259 &	0.0230 &	0.0215 &	0.0212 &	0.0210 &	0.0209 &	0.0210 &	0.0208 &	0.0208 &	0.0208 \\
 & 	2000 &	5000 &	0.0087 &	0.0061 &	0.0053 &	0.0049 &	0.0048 &	0.0046 &	0.0045 &	0.0045 &	0.0044 &	0.0044 \\
 & 	2000 &	10000 &	0.0105 &	0.0085 &	0.0078 &	0.0077 &	0.0075 &	0.0074 &	0.0074 &	0.0073 &	0.0073 &	0.0072 \\
 & 	2000 &	20000 &	0.0122 &	0.0110 &	0.0105 &	0.0103 &	0.0103 &	0.0102 &	0.0102 &	0.0102 &	0.0102 &	0.0101 \\  \midrule
\multirow{6}{*}{$\psi(1) - \psi(0)$} &	1000 &	2000 &	0.0201 &	0.0148 &	0.0133 &	0.0128 &	0.0127 &	0.0123 &	0.0121 &	0.0121 &	0.0121 &	0.0121 \\
 & 	1000 &	5000 &	0.0336 &	0.0312 &	0.0300 &	0.0289 &	0.0285 &	0.0284 &	0.0284 &	0.0282 &	0.0283 &	0.0282 \\
 & 	1000 &	10000 &	0.0469 &	0.0462 &	0.0428 &	0.0429 &	0.0429 &	0.0425 &	0.0428 &	0.0425 &	0.0426 &	0.0426 \\
 & 	2000 &	5000 &	0.0106 &	0.0088 &	0.0084 &	0.0081 &	0.0080 &	0.0079 &	0.0078 &	0.0078 &	0.0078 &	0.0078 \\
 & 	2000 &	10000 &	0.0168 &	0.0151 &	0.0148 &	0.0147 &	0.0147 &	0.0145 &	0.0145 &	0.0145 &	0.0144 &	0.0144 \\
 & 	2000 &	20000 &	0.0233 &	0.0224 &	0.0220 &	0.0218 &	0.0219 &	0.0218 &	0.0218 &	0.0217 &	0.0217 &	0.0217 \\ \bottomrule
\end{tabular}
}
\end{sidewaystable}


\clearpage

\begin{sidewaystable}[]
\caption{Variance estimates for simulation scenarios with continuous $Z_1$ and simple random sampling.} \label{table:VARIANCE_continuousSRS_results}
\resizebox{\textwidth}{!}{
\begin{tabular}{ccccccccccccc}
\toprule
\multirow{2}{*}{\begin{tabular}[c]{@{}l@{}}Target\\ parameter\end{tabular}} & \multirow{2}{*}{\begin{tabular}[c]{@{}l@{}}Average\\ trial size\end{tabular}} & \multirow{2}{*}{$n$} & \multicolumn{10}{c}{Marginal sampling probability among non-randomized individuals}                                                                                                                   \\ \cline{4-13}
                                                                              &               &       & 0.1            & 0.2            & 0.3            & 0.4            & 0.5            & 0.6            & 0.7            & 0.7            & 0.9            & 1                                            \\ \midrule
\multirow{6}{*}{$\psi(1)$} &	1000 &	2000 &	0.0089 &	0.0074 &	0.0067 &	0.0066 &	0.0062 &	0.0061 &	0.0060 &	0.0060 &	0.0059 &	0.0059 \\
 & 	1000 &	5000 &	0.0178 &	0.0163 &	0.0153 &	0.0146 &	0.0145 &	0.0148 &	0.0144 &	0.0143 &	0.0144 &	0.0143 \\
 & 	1000 &	10000 &	0.0247 &	0.0241 &	0.0223 &	0.0222 &	0.0224 &	0.0218 &	0.0217 &	0.0217 &	0.0218 &	0.0217 \\
 & 	2000 &	5000 &	0.0124 &	0.0110 &	0.0105 &	0.0106 &	0.0101 &	0.0100 &	0.0101 &	0.0099 &	0.0099 &	0.0099 \\
 & 	2000 &	10000 &	0.0255 &	0.0217 &	0.0218 &	0.0208 &	0.0209 &	0.0211 &	0.0212 &	0.0209 &	0.0207 &	0.0206 \\
 & 	2000 &	20000 &	0.0319 &	0.0296 &	0.0298 &	0.0292 &	0.0292 &	0.0290 &	0.0288 &	0.0291 &	0.0291 &	0.0288 \\ \midrule
\multirow{6}{*}{$\psi(0)$} &	1000 &	2000 &	0.0129 &	0.0094 &	0.0082 &	0.0078 &	0.0076 &	0.0074 &	0.0072 &	0.0071 &	0.0071 &	0.0071 \\
 & 	1000 &	5000 &	0.0192 &	0.0170 &	0.0156 &	0.0157 &	0.0151 &	0.0147 &	0.0149 &	0.0146 &	0.0147 &	0.0146 \\
 & 	1000 &	10000 &	0.0249 &	0.0224 &	0.0216 &	0.0211 &	0.0213 &	0.0210 &	0.0209 &	0.0208 &	0.0209 &	0.0208 \\
 & 	2000 &	5000 &	0.0142 &	0.0117 &	0.0112 &	0.0105 &	0.0107 &	0.0104 &	0.0103 &	0.0103 &	0.0103 &	0.0102 \\
 & 	2000 &	10000 &	0.0222 &	0.0203 &	0.0195 &	0.0192 &	0.0189 &	0.0189 &	0.0188 &	0.0188 &	0.0186 &	0.0186 \\
 & 	2000 &	20000 &	0.0315 &	0.0285 &	0.0275 &	0.0272 &	0.0273 &	0.0275 &	0.0273 &	0.0274 &	0.0271 &	0.0271 \\ \midrule
\multirow{6}{*}{$\psi(1) - \psi(0)$} &	1000 &	2000 &	0.0178 &	0.0142 &	0.0130 &	0.0128 &	0.0126 &	0.0123 &	0.0121 &	0.0122 &	0.0120 &	0.0121 \\
 & 	1000 &	5000 &	0.0338 &	0.0319 &	0.0297 &	0.0291 &	0.0285 &	0.0287 &	0.0285 &	0.0282 &	0.0284 &	0.0282 \\
 & 	1000 &	10000 &	0.0482 &	0.0459 &	0.0438 &	0.0431 &	0.0437 &	0.0429 &	0.0426 &	0.0425 &	0.0428 &	0.0426 \\
 & 	2000 &	5000 &	0.0244 &	0.0215 &	0.0208 &	0.0203 &	0.0202 &	0.0199 &	0.0200 &	0.0199 &	0.0198 &	0.0197 \\
 & 	2000 &	10000 &	0.0467 &	0.0417 &	0.0416 &	0.0403 &	0.0403 &	0.0404 &	0.0404 &	0.0400 &	0.0399 &	0.0398 \\
 & 	2000 &	20000 &	0.0629 &	0.0581 &	0.0575 &	0.0565 &	0.0567 &	0.0567 &	0.0564 &	0.0567 &	0.0565 &	0.0562 \\ \bottomrule
\end{tabular}
}
\end{sidewaystable}


\clearpage

\begin{sidewaystable}[]
\caption{Variance estimates for simulation scenarios with binary $Z_1$ and covariate-dependent sampling.} \label{table:VARIANCE_binary_results}
\resizebox{\textwidth}{!}{
\begin{tabular}{ccccccccccccc}
\toprule
\multirow{2}{*}{\begin{tabular}[c]{@{}l@{}}Target\\ parameter\end{tabular}} & \multirow{2}{*}{\begin{tabular}[c]{@{}l@{}}Average\\ trial size\end{tabular}} & \multirow{2}{*}{$n$} & \multicolumn{10}{c}{Marginal sampling probability among non-randomized individuals}                                                                                                                   \\ \cline{4-13}
                                                                              &               &       & 0.1            & 0.2            & 0.3            & 0.4            & 0.5            & 0.6            & 0.7            & 0.7            & 0.9            & 1                                            \\ \midrule
\multirow{6}{*}{$\psi(1)$} &	1000 &	2000 &	0.0079 &	0.0056 &	0.0052 &	0.0049 &	0.0048 &	0.0047 &	0.0046 &	0.0046 &	0.0045 &	0.0045 \\
 & 	1000 &	5000 &	0.0113 &	0.0103 &	0.0097 &	0.0094 &	0.0093 &	0.0092 &	0.0091 &	0.0091 &	0.0091 &	0.0091 \\
 & 	1000 &	10000 &	0.0132 &	0.0121 &	0.0119 &	0.0118 &	0.0117 &	0.0117 &	0.0116 &	0.0116 &	0.0116 &	0.0115 \\
 & 	2000 &	5000 &	0.0041 &	0.0033 &	0.0031 &	0.0030 &	0.0029 &	0.0029 &	0.0028 &	0.0028 &	0.0028 &	0.0028 \\
 & 	2000 &	10000 &	0.0056 &	0.0050 &	0.0048 &	0.0047 &	0.0046 &	0.0046 &	0.0046 &	0.0046 &	0.0046 &	0.0045 \\
 & 	2000 &	20000 &	0.0068 &	0.0063 &	0.0062 &	0.0061 &	0.0061 &	0.0061 &	0.0060 &	0.0060 &	0.0060 &	0.0060 \\ \midrule
\multirow{6}{*}{$\psi(0)$} &	1000 &	2000 &	0.0099 &	0.0072 &	0.0065 &	0.0061 &	0.0058 &	0.0057 &	0.0056 &	0.0055 &	0.0054 &	0.0054 \\
 & 	1000 &	5000 &	0.0127 &	0.0106 &	0.0100 &	0.0095 &	0.0095 &	0.0093 &	0.0092 &	0.0091 &	0.0091 &	0.0090 \\
 & 	1000 &	10000 &	0.0153 &	0.0136 &	0.0130 &	0.0127 &	0.0126 &	0.0125 &	0.0125 &	0.0125 &	0.0124 &	0.0124 \\
 & 	2000 &	5000 &	0.0053 &	0.0039 &	0.0035 &	0.0034 &	0.0032 &	0.0031 &	0.0031 &	0.0030 &	0.0030 &	0.0030 \\
 & 	2000 &	10000 &	0.0064 &	0.0056 &	0.0052 &	0.0050 &	0.0049 &	0.0049 &	0.0049 &	0.0048 &	0.0048 &	0.0048 \\
 & 	2000 &	20000 &	0.0074 &	0.0067 &	0.0064 &	0.0063 &	0.0063 &	0.0063 &	0.0062 &	0.0062 &	0.0062 &	0.0062 \\ \midrule
\multirow{6}{*}{$\psi(1) - \psi(0)$} &	1000 &	2000 &	0.0128 &	0.0106 &	0.0099 &	0.0097 &	0.0094 &	0.0093 &	0.0093 &	0.0092 &	0.0091 &	0.0091 \\
 & 	1000 &	5000 &	0.0205 &	0.0188 &	0.0184 &	0.0180 &	0.0179 &	0.0178 &	0.0176 &	0.0176 &	0.0176 &	0.0175 \\
 & 	1000 &	10000 &	0.0263 &	0.0247 &	0.0243 &	0.0239 &	0.0239 &	0.0239 &	0.0239 &	0.0238 &	0.0238 &	0.0237 \\
 & 	2000 &	5000 &	0.0068 &	0.0060 &	0.0057 &	0.0056 &	0.0055 &	0.0054 &	0.0054 &	0.0054 &	0.0054 &	0.0053 \\
 & 	2000 &	10000 &	0.0104 &	0.0097 &	0.0095 &	0.0094 &	0.0093 &	0.0093 &	0.0093 &	0.0092 &	0.0092 &	0.0092 \\
 & 	2000 &	20000 &	0.0132 &	0.0126 &	0.0124 &	0.0123 &	0.0123 &	0.0123 &	0.0122 &	0.0122 &	0.0122 &	0.0122 \\ \bottomrule
\end{tabular}
}
\end{sidewaystable}


\clearpage

\begin{sidewaystable}[]
\caption{Variance estimates for simulation scenarios with binary $Z_1$ and simple random sampling.} \label{table:VARIANCE_binarySRS_results}
\resizebox{\textwidth}{!}{
\begin{tabular}{ccccccccccccc}
\toprule
\multirow{2}{*}{\begin{tabular}[c]{@{}l@{}}Target\\ parameter\end{tabular}} & \multirow{2}{*}{\begin{tabular}[c]{@{}l@{}}Average\\ trial size\end{tabular}} & \multirow{2}{*}{$n$} & \multicolumn{10}{c}{Marginal sampling probability among non-randomized individuals}                                                                                                                   \\ \cline{4-13}
                                                                              &               &       & 0.1            & 0.2            & 0.3            & 0.4            & 0.5            & 0.6            & 0.7            & 0.7            & 0.9            & 1                                            \\ \midrule
\multirow{6}{*}{$\psi(1)$} &	1000 &	2000 &	0.0077 &	0.0055 &	0.0051 &	0.0049 &	0.0047 &	0.0047 &	0.0046 &	0.0046 &	0.0045 &	0.0045 \\
 & 	1000 &	5000 &	0.0111 &	0.0102 &	0.0097 &	0.0093 &	0.0092 &	0.0092 &	0.0091 &	0.0091 &	0.0091 &	0.0091 \\
 & 	1000 &	10000 &	0.0131 &	0.0121 &	0.0119 &	0.0118 &	0.0117 &	0.0117 &	0.0117 &	0.0116 &	0.0116 &	0.0115 \\
 & 	2000 &	5000 &	0.0079 &	0.0070 &	0.0067 &	0.0065 &	0.0064 &	0.0063 &	0.0063 &	0.0063 &	0.0063 &	0.0062 \\
 & 	2000 &	10000 &	0.0147 &	0.0137 &	0.0133 &	0.0130 &	0.0129 &	0.0130 &	0.0128 &	0.0129 &	0.0128 &	0.0128 \\
 & 	2000 &	20000 &	0.0180 &	0.0160 &	0.0158 &	0.0157 &	0.0157 &	0.0157 &	0.0156 &	0.0156 &	0.0155 &	0.0155 \\ \midrule
\multirow{6}{*}{$\psi(0)$} &	1000 &	2000 &	0.0092 &	0.0069 &	0.0064 &	0.0060 &	0.0058 &	0.0056 &	0.0056 &	0.0055 &	0.0054 &	0.0054 \\
 & 	1000 &	5000 &	0.0122 &	0.0102 &	0.0098 &	0.0095 &	0.0095 &	0.0093 &	0.0091 &	0.0091 &	0.0091 &	0.0090 \\
 & 	1000 &	10000 &	0.0148 &	0.0134 &	0.0130 &	0.0127 &	0.0126 &	0.0125 &	0.0125 &	0.0124 &	0.0124 &	0.0124 \\
 & 	2000 &	5000 &	0.0090 &	0.0075 &	0.0070 &	0.0069 &	0.0067 &	0.0065 &	0.0065 &	0.0064 &	0.0064 &	0.0064 \\
 & 	2000 &	10000 &	0.0139 &	0.0125 &	0.0122 &	0.0119 &	0.0117 &	0.0117 &	0.0116 &	0.0116 &	0.0116 &	0.0115 \\
 & 	2000 &	20000 &	0.0157 &	0.0150 &	0.0146 &	0.0145 &	0.0144 &	0.0144 &	0.0144 &	0.0143 &	0.0143 &	0.0143 \\ \midrule
\multirow{6}{*}{$\psi(1) - \psi(0)$} &	1000 &	2000 &	0.0126 &	0.0104 &	0.0098 &	0.0097 &	0.0095 &	0.0093 &	0.0093 &	0.0092 &	0.0091 &	0.0091 \\
 & 	1000 &	5000 &	0.0204 &	0.0186 &	0.0183 &	0.0179 &	0.0178 &	0.0177 &	0.0176 &	0.0177 &	0.0176 &	0.0175 \\
 & 	1000 &	10000 &	0.0261 &	0.0247 &	0.0243 &	0.0240 &	0.0240 &	0.0239 &	0.0239 &	0.0238 &	0.0238 &	0.0237 \\
 & 	2000 &	5000 &	0.0145 &	0.0132 &	0.0128 &	0.0127 &	0.0125 &	0.0124 &	0.0124 &	0.0123 &	0.0123 &	0.0123 \\
 & 	2000 &	10000 &	0.0273 &	0.0254 &	0.0250 &	0.0246 &	0.0244 &	0.0245 &	0.0243 &	0.0244 &	0.0244 &	0.0243 \\
 & 	2000 &	20000 &	0.0328 &	0.0303 &	0.0300 &	0.0300 &	0.0298 &	0.0298 &	0.0298 &	0.0297 &	0.0296 &	0.0296 \\ \bottomrule
\end{tabular}
}
\end{sidewaystable}

\clearpage

\begin{figure}[!htbp]
\caption{Simulation results for the sampling variance of estimators for $\psi(a), a = 0,1$ and $\psi(1) - \psi(0)$, with average trial sample size of 2000 individuals. Results in each panel are shown for different data generating mechanisms (continuous or binary $Z_1$) and sampling mechanisms (dependent on $Z_1$ or simple random sampling, SRS). In all panels, results are shown for $\widehat \psi(a)$ under marginal sampling probabilities ranging from 0.1 to 0.9, in steps of 0.1 (black markers); and for $\widehat \psi_{\text{\tiny nosub}}(a)$ under no sub-sampling (white markers). In each panel, results are shown for cohort sample sizes of 5000 (circles), 10,000 (triangles), and 20,000 (squares) individuals.}\label{fig:simulation_variance2000}
\centering
  \includegraphics[scale = 1.8]{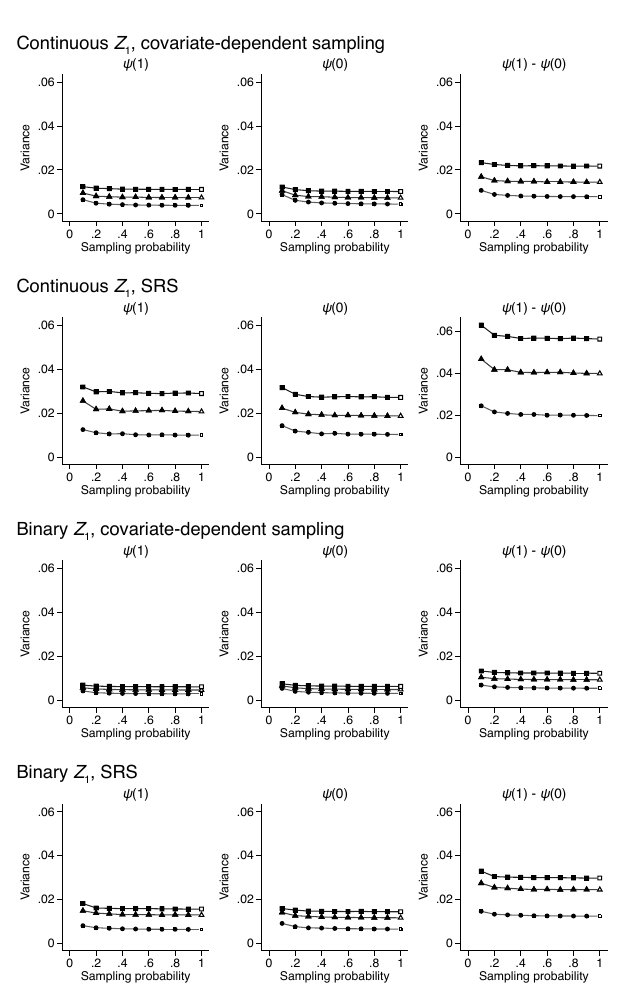}
\end{figure}

\clearpage
\begin{figure}[!htbp]
  \caption{CASS analysis results comparing the estimated standard errors of the estimator in (\ref{eq:estimator}) for $\psi(a), a = 0,1$ and $\psi(1) - \psi(0)$, under simple random sampling of non-randomized individuals, against the estimator in (\ref{eq:estimator_nosub}); see main text for details. In all panels, standard error ratios are shown for marginal sampling probabilities ranging from 0.1 to 1, in steps of 0.1.}\label{fig:cass_srs}
  \centering
  \includegraphics[scale = 1.8]{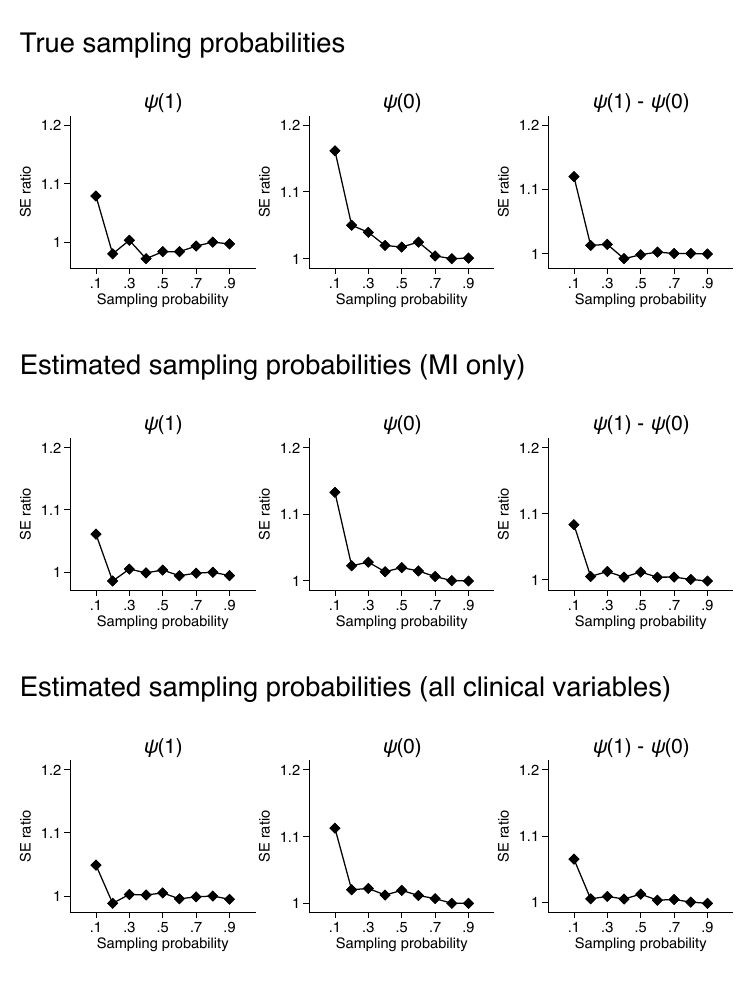}
\end{figure}

\end{document}